\pgfplotsset{compat=1.8}
\newtheorem*{remark}{Remark}
\newtheorem{theorem}{Theorem}
\title{Probabilistic Constrained Bayesian Inversion for Transpiration Cooling}
\date{March 2022}
\author{Ella Steins \thanks{IRTG Modern Inverse Problems, RWTH Aachen University, Schinkelstr. 2, Aachen, steins@aices.rwth-aachen.de}, Tan Bui-Thanh \thanks{The Oden Institute for Computational Engineering \& Sciences, and the Department of Aerospace Engineering \& Engineering Mechanics, tanbui@oden.utexas.edu}, Michael Herty \thanks{Institute for Geometry and Practical Mathematics (IGPM), RWTH Aachen University, Templergraben 55, Aachen, herty@igpm.rwth-aachen.de, mueller@igpm.rwth-aachen.de}, Siegfried Müller \footnotemark[3]}
\begin{document}
	\maketitle
	\begin{abstract}              
		To enable safe operations in applications such as rocket combustion chambers, the materials require cooling to avoid material damage. Here, transpiration cooling is a promising cooling technique. Numerous studies investigate possibilities to simulate and evaluate the complex cooling mechanism. One naturally arising question is the amount of coolant required to ensure a safe operation. To study this, we introduce an approach that determines the posterior probability distribution of the Reynolds number using an inverse problem and constraining the maximum temperature of the system under parameter uncertainties. Mathematically, this chance inequality constraint is dealt with by a generalized Polynomial Chaos expansion of the system. The posterior distribution will be evaluated by different Markov Chain Monte Carlo based methods. A novel method for the constrained case is proposed and tested among others on two-dimensional transpiration cooling models.
	\end{abstract}
	\setcounter{tocdepth}{1}
	\section{Introduction}
	Efficient cooling concepts are needed in all applications, where materials have to be protected against thermal damage as a result of high thermal loads. Passive cooling for instance uses coating of the material as a heat shield  \cite{arai2013, esser2016innovative, Langener2011}.
	We are interested in models for active cooling techniques, like film and transpiration cooling (TC) where a coolant is used that transfers heat away from the materials in need of protection. These type of methods generally have a great cooling potential \cite{esser2016innovative}. Here, transpiration cooling is designed as the injection of a coolant into a hot gas flow through a porous medium. Cooling is achieved by both convection energy within the wall while the coolant passes through the porous medium as well as a coolant film that insulates the wall from the hot gas flow \cite{Langener2011}. \\
	Some applications nowadays considered for transpiration cooling include scramjet combustion chambers \cite{dlr129995}, rocket thrust chambers \cite{Dahmen2014}, turbine blades \cite{huang2017experimental} and others \cite{arai2013, boehrk2014transpiration}. Even so it has been invented in the 1950's \cite{Eckert1954ComparisonOE}, due to both advances made in manufacturing and the need for greater cooling efficiency, recently transpiration cooling has come back into the focus of research and has been intensively studied experimentally, analytically and numerically \cite{Dahmen2014, selzer2009production, wang2004experimental, Boehrk2010, DING2019422, leontiev2019effect, christopher2020dns}. \\
	
	The porous materials used for transpiration cooling have to meet certain requirements, like high porosity, light-weight, precise geometry and additionally, for turbine blades \cite{huang2017experimental},  great material strength. Commonly used types of porous materials are sintered metal or ceramic porous media \cite{boehrk2014transpiration} and ceramic matrix composites \cite{reimer2011transpiration}, leaving the pores to be a result of randomness. This in turn leads to negative cooling effects like badly connected pores and demands also suitable simulation techniques to quantify its effect \cite{Selzer2014}. Additionally, effects of non-uniform permeability have been investigated \cite{wu2018optimization}. Recently, the improved precision of the manufacturing process by additive manufacturing methods has been studied \cite{huang2018transpiration, min2019experimental}. \\
	
	For simulation and optimization of transpiration cooling, temperature models \cite{Boehrk2010, liu2013effects} and coupled simulations \cite{DING2019422, Dahmen2014} are used. The point of departure of this work is the publication by Dahmen et al. \cite{Dahmen2014}. Therein, numerical simulation of transpiration cooling by coupling a porous media flow with a hot gas flow of a rocket combustion chamber is proposed. Numerical experiments in both 2D and 3D are reported. In the following sections we also focus on the interaction of the coolant with the temperature to allow for the treatment of e.g. temperature constraints. Alternative detailed injection models are based on turbulence modeling, heuristic considerations, and DNS simulations. \cite{wilcox2006turbulence, Koenig, christopher2020dns, cerminara}. \\
	
	We propose to consider a model with uncertain parameters to acknowledge for the uncertainties present in the production of the porous material and the simulation of the flow as well as the so far missing detailed physical model for the interaction of flow and porous media and the presence of inequality requirements for the temperature. \\ Parametric uncertainties can be handled by non-intrusive methods \cite{mackay1998introduction, xiu2005high} that repeatedly evaluate the deterministic model such that statistical properties of the system's answer can be derived. \\
	
	Generalized polynomial chaos (gPC) is an intrusive method that represents the full output probability through a spectral expansion with orthogonal basis polynomials \cite{wiener1938homogeneous, cameron1947orthogonal}. In \cite{ghanem1998probabilistic} stochastic finite element systems for probabilistic transport in porous media are considered.  Xiu and Karniadakis extended the polynomial chaos expansion to various orthogonal basis polynomials \cite{xiu2002wiener}. Further, gPC can be used to perform global sensitivity analysis \cite{sudret2008global}. \\
	
	As the evaluation of the temperature constraint in TC requires the full probability distribution, we propose here a gPC expansion. At the same time, optimizing the cooling effect by choosing a suitable Reynolds number while using the pressure as a simulation output, is formulated as an inverse problem. The latter is solved using a framework based on Bayesian Inversion. \\ 
	Even so research devoted to Bayesian Inverse Problems is available, we only review literature on the constrained case. Wu et al. extend Bayesian inversion by incorporating an additional likelihood to the Bayesian Inversion based on fitness of the solution to the constraint \cite{Wu2019}. \\
	Another approach is presented in the context of Bayesian optimization in the machine learning community. Here, in order to globally optimize black-box derivative-free methods, a statistical Gaussian surrogate for the objective function is built upon Gaussian process regression \cite{Frazier2018}. An acquisition function is derived from the surrogate and extended to the constrained optimization problems in \cite{Gardner2014, Gelbert2015}. \\
	
	The paper is organized as follows: In Section 2, the two models for transpiration cooling are presented and the extension to the stochastic system is given. Section 3 then introduces the solution method, a novel constrained random walk Markov Chain Monte Carlo method, and two alternative solution methods for comparison. Following this, the numerical results are discussed in Section 4. 
	
	\section{Modeling transpiration cooling at an interface}
	
	\textbf{Notation} In the following sections, all random variables (RV) $\boldsymbol{X}$ are given in bold type and realizations are $X$. We assume each RV $\boldsymbol{X}$ has a probability density denoted by $\Pi(X)$. Data is denoted by the suffix $^{data}$. \\ \\
	
	The dimensionless \textit{forward problem} consists of two coupled ordinary differential equations (ODEs), one for the fluid temperature $T_f=T_f(x; \boldsymbol{Re})$, one for the solid temperature $T_s(x; \boldsymbol{Re})$ of a spatially 1-D strip porous media of normalized length $x \in [0,1]$. Here, $\boldsymbol{Re}$ is the uncertain Reynolds number\footnote{The Reynolds number $Re := \frac{\dot{m}L}{A \mu}$ for $\dot{m}, L, A, \mu$ corresponds to the mass flow in the dimensionless system.}. The third ODE describes the evolution of the density $\rho_f$ of the coolant, where $\rho_f'(x, \boldsymbol{Re})=\frac{d\rho_f(x, \boldsymbol{Re})}{dx}$. The velocity of the coolant is denoted by $v(x)$, $x \in [0,1]$.
	\begin{subequations}
		\begin{align}
			\begin{split} 
				T_s'(x; \boldsymbol{Re})&=\frac{\kappa_f}{(1-\varphi)\cdot \kappa_s}\boldsymbol{Re}\cdot Pr_f\cdot (T_f(x; \boldsymbol{Re})-T_{HG})+\frac{q}{(1-\phi)\cdot \kappa_s}, 
				\label{2a}
			\end{split}
			\\
			\begin{split}
				T_f'(x; \boldsymbol{Re})&=\frac{Nu_{v,f}}{Pr_f \cdot \boldsymbol{Re}}\cdot (T_s(x; \boldsymbol{Re})-T_f(x; \boldsymbol{Re})), 
				\label{2b}
			\end{split}
			\\
			\begin{split}
				\rho_f'(x; \boldsymbol{Re}) &= N(x; \boldsymbol{Re}) \cdot \rho_f(x; \boldsymbol{Re})
				\label{2c}
			\end{split}
			\\
			\begin{split}
				\intertext{with}
				N(x; \boldsymbol{Re}) &= \dfrac{\frac{Nu_{v,f}}{\boldsymbol{Re} \cdot Pr_{f}} \rho_f^2(x; \boldsymbol{Re}) \left(T_s(x; \boldsymbol{Re})-T_f(x; \boldsymbol{Re})\right)+\left(\frac{L^2}{\boldsymbol{Re} K_D}+\frac{L}{K_F} \right)}{\phi^{-2}-\rho_f^2(x; \boldsymbol{Re})T_f(x; \boldsymbol{Re})}, \nonumber 
			\end{split}
			\\
			\begin{split}
				v(x; \boldsymbol{Re})&=\frac{1}{\rho_f(x; \boldsymbol{Re})}.
				\label{2d}
			\end{split}
			\\
			\intertext{The pressure at the interface is denoted by $p$ and}
			\begin{split}
				\boldsymbol{p} :&=p_f(x=1; \boldsymbol{Re})=T_f(x=1; \boldsymbol{Re}) \cdot \rho_f(x=1; \boldsymbol{Re})
				\label{2e}
			\end{split}
			\\
			\intertext{and the inititial conditions are given by} 
			\begin{split}
				T_f(0; \boldsymbol{Re}) &=T_c \, , \, T_s(0; \boldsymbol{Re})=T_b \, , \, \rho_f(0; \boldsymbol{Re})=\frac{p_R}{T_c}.
				\label{2f}
			\end{split}
		\end{align}
	\end{subequations}
	For this system (\ref{2a})-(\ref{2f}), the other model parameters are given in Table \ref{tab_set} in the Appendix. \\
	The \textit{deterministic forward problem} is summarized as 
	\begin{align}
		\boldsymbol{p}:= F(\boldsymbol{Re}),
		\label{1}
	\end{align}
	where the pressure $\boldsymbol{p}:=p_f(x=1; Re)$ in (\ref{2e}) is observable at the position $x=1$. Observations of the pressure are denoted by $p^{data}$ and are drawn from a numerical integration of equations (\ref{2a})-(\ref{2f}) of the coolant through a porous material with superposed white noise. The physical relations are described by the operator $F$ which is comprised of a differential-algebraic system of equations (DAE). \\
	We will first consider the spatially homogeneous situation, where we refer to (\ref{2a})-(\ref{2f}) as \textit{model 1}. The constraint $T_f(x=1; Re) \leq T_{max}$ will then be imposed on the temperature $T_f$. \\ 
	In a second step, we extend the problem to the non-homogeneous case, here referred to as \textit{model 2}. We consider a porous media with elongation $z \in (d_1, d_2)$ divided in two sections, where each section consists of finitely many vertical 1D pores as indicated in Figure \ref{sketch}. Each vertical strip is parametrized by $x \in (0,1)$. Both sections have their own deterministic porosity $\phi_j, j=1,2$; but the \textit{same} uncertain Reynolds number $\boldsymbol{Re}$. This geometry is common, see \cite{Peichl2021}. At the boundary of the porous media we assume solid material with a possibly high initial temperature $T_0 > T_{max}$. We are interested in the evolution of the temperature $T_h(z, t)$ at the interface of the porous material with the surrounding conditions over time $t$. In our model this temperature is obtained from the porous media equations at $x=1$. \\
	The temperatures $T_h(z_i, t=0) = T_f^i(x=1; \boldsymbol{Re})$ at different locations $z_i \in (0,1) \, , i=1,...,N$, are then given by \textit{model 1} (\ref{2a})-(\ref{2f}) with porosity $\phi_i$. Here, we denote by $T_f^i(x=1; \boldsymbol{Re})$ the solution to (\ref{2a})-(\ref{2f}) for fixed porosity $\phi_j$. Across the interface $z$ the evolution of the temperature is assumed to be given by the linear heat diffusion equation with known deterministic diffusion coefficient $\lambda > 0$: 
	\begin{subequations}
		\begin{align}
			\begin{split}
				0 &= \frac{\partial T_h(z,t)}{\partial t}-\lambda \frac{\partial^2 T_h(z,t)}{\partial z^2}, \, \, \, \, z \in [0,1], \, t \geq 0.
				\label{hd}
			\end{split}
			\intertext{The initial condition is given by the temperature obtained by the 1D strip, see (\ref{hd_rb}), where we assume the porous media pores are of size $\Delta z$:}
			\begin{split}
				T_h(z, t=0) &= \left \{  \begin{matrix}  \sum_{i=d_1}^{d_2} \chi_{[z_i-\Delta z, z_i+\Delta z]}(z) T_f(x=1; \boldsymbol{Re}) &  z \in(d_1, d_2) \subset [0,1] \\ T_0 &\text{ otherwise} \end{matrix} \right.
				\label{hd_rb}
			\end{split}
		\end{align}
	\end{subequations}
	and Neumann boundary conditions at $z=0$ and $z=1$, respectively. \\ The constraint will be imposed at a fixed terminal time $t=t_c$ and hence reads
	\begin{equation}
		T_h(z, t=t_c) \leq T_{max} \, \, \forall z \in [0,1].
		\label{hd_constraint}
	\end{equation}
	
	{\begin{figure}[h]
			\centering
			\includegraphics[width=1\linewidth]{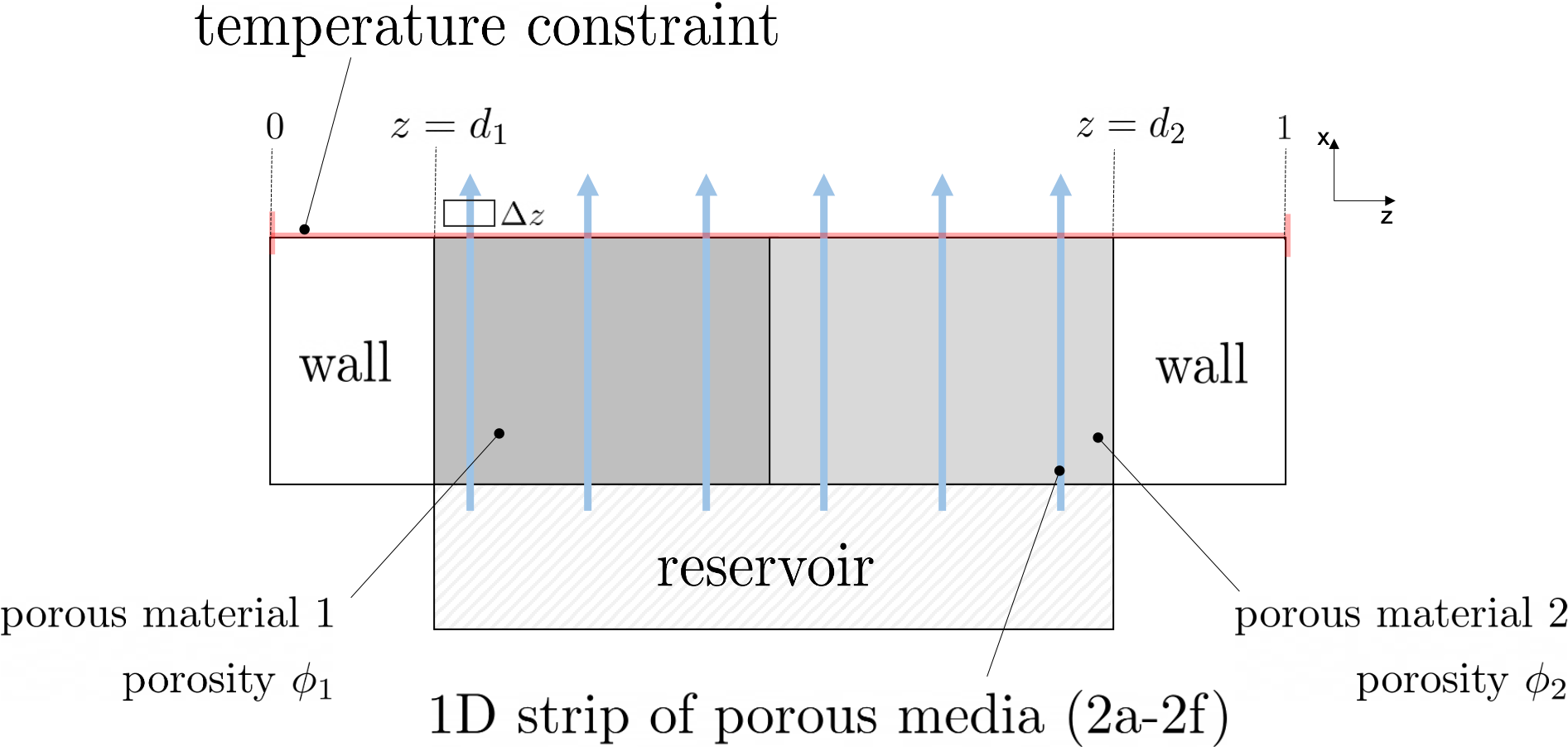}
			\caption{This figure shows the setup of \textit{model 2} and \textit{model 3}. Two porous materials with different porosities are considered, and sixty 1D strip simulations are performed to derive the spatially-dependent temperature at the interface. Across $z$, the evolution of the temperature is assumed to be given by the heat equation. The temperature constraint is imposed over the whole length of $z$ at some time $t=t_c > t_0$.}
			\label{sketch}
	\end{figure} }
	
	Incorporating model uncertainties into the design is crucial for substantively design and risk assessment The porosity $\pmb{\phi}$, respectively, is a result of a manufacturing process where the porous material breaks at random spots \cite{min2019experimental}. The heat flux $\boldsymbol{q}$ is random itself as a product of the randomness of the porous material. A sensitivity analysis of the model showed the highest sensitivity of the temperature towards uncertainties in $(\boldsymbol{Re}, \boldsymbol{\phi_i}, \boldsymbol{q})$. For the one-dimensional model, the heat flux $\boldsymbol{q}$ and the porosity $\pmb{\phi}$ will be treated as parametric uncertainties. For \textit{model 2} we can account for a spatial uncertainty by modeling two sections with separate determinsitic porosities but the same uncertain heat flux $\boldsymbol{q}$. Lastly, \textit{model 3} additionally incorporates spatially uncertain heat flux with RVs $\boldsymbol{q_1}, ..., \boldsymbol{q_{N}}$. \\ Even so the following discussion is not limited to this case, we assume normally distributed uncertainties, with $\boldsymbol{q} \sim \mathcal{N}(q_0, \sigma_q^2)$ and $\boldsymbol{\phi} \sim \mathcal{N}(\phi_0, \sigma_\phi^2)$. \\
	
	As the parametric uncertainties of the heat flux and the porosity enter the temperature, $T_f$ and $T_h$ also become RVs, i.e. $\boldsymbol{T}_f=\boldsymbol{T}_f(x,\boldsymbol{q},\boldsymbol{\phi}; \boldsymbol{Re})$. To formalize, we assume that $(\boldsymbol{q},\boldsymbol{\phi})$ is a RV defined on a probability space $(\Omega, P(\Omega); \mathbb{P})$ mapping to $\mathbb{R}^2$. \\ For each fixed $(x, Re) \in (0,1) \times \mathbb{R}^+$ we additionally assume that $\boldsymbol{T}_f(x,\cdot, \cdot; Re) \in L^2(\Omega; \mathbb{R})$. 
	We apply a gPC expansion to the system of ODEs governing the temperature evolution in the porous media. Denote by $\mu_1(q)dq$ and $\mu_2(\phi)d\phi$ the probability measures of the independent RVs $\boldsymbol{q}$ and $\boldsymbol{\phi}$, respectively. Then, we define for $d\nu(q, \phi) = \mu(q)\mu_2(\phi)dqd\phi$ and $y_i \in L_v^2(\Omega; \mathbb{R})$ the scalar product
	\begin{equation}
		< y_1 ,  y_2> = \int_{-\infty}^{+\infty} \int_{-\infty}^{+\infty} y_1\left(\tilde{q}, \tilde{\phi} \right) \cdot y_2\left(\tilde{q}, \tilde{\phi} \right) d\nu(\tilde{q}, \tilde{\phi}).
		\label{Hermite_orthotogonal1}
	\end{equation}
	Then, for any $y \in L_v^2(\Omega; \mathbb{R})$ we have $|| \sum_{k=0}^K \bar{y}_k \cdot \tilde{\Phi}_k(q,\phi) - y||_{L_v^2} \rightarrow 0, K\rightarrow\infty,$ where $\bar{y}_k$ is the Galerkin projection onto the space spanned by the polynomials $\tilde{\Phi}_0, ..., \tilde{\Phi}_N$. The previous expansion is the gPC expansion and we apply it to $\boldsymbol{T}_f(x,\boldsymbol{q},\boldsymbol{\phi}; Re)$. Since $(\boldsymbol{q}, \boldsymbol{\phi})$ are independent RV we have after possible normalization that $\tilde{\Phi}_k(q,\phi)=\Phi_k(q) \Phi_k(\phi)$. \\
	The coefficients $\hat{T}_{ij}$ are determined by solving
	\begin{subequations}
		\begin{equation}
			\frac{d}{dx} \hat{T^f}_{i,j} = \frac{1}{<\Phi_i^2, \Phi_j^2>} \int_{-\infty}^{\infty}  \int_{-\infty}^{\infty} \frac{Nu}{Re Pr} \left( T^K_s - T^K_f \right) d\phi dq, 
		\end{equation}
		\begin{equation}
			\frac{d}{dx} \hat{T^s}_{i,j}= \frac{1}{<\Phi_i^2, \Phi_j^2>} \int_{-\infty}^{\infty}  \int_{-\infty}^{\infty} \frac{\kappa_f}{(1-\phi)\kappa_s} Re Pr \left( T^K_f - T_{hg} \right) + \frac{q}{1-\phi \, \kappa_s} d\phi dq
		\end{equation}
		where $T^K_s=\sum \limits_{i,j=0}^{K} \hat{T^s}_{ij}(x; Re) \Phi_i(q) \Phi_j(\phi) $, $T_f^K=\sum \limits_{i,j=0}^{K} \hat{T^f}_{ij}(x; Re) \Phi_i(q) \Phi_j(\phi)$,
	\end{subequations}
	\[
	\hat{T^f}^{0}_{i,j}= \left \{  \begin{matrix} T_c &  i=j=1  \\ 0 &\text{ otherwise} \end{matrix} \right. \text{ and similarly }
	\hat{T^s}^{0}_{i,j}= \left \{  \begin{matrix} T_b &  i=j=1  \\ 0 &\text{ otherwise} \end{matrix} \right..
	\]
	Numerically, an explicit Euler scheme is used in Section 4 as well as stochastic collocation to approximate the integrals. The gPC expansion of the temperature $\boldsymbol{T}_h(z,t)$ can be obtained using also a gPC expansion of eq. (\ref{hd}).

	\section{Solution Method}
	In this section, the framework of probabilistic constrained Bayesian Inversion is introduced. A solution method based on a combination of gPC and constrained Markov Chain Monte Carlo is proposed. The method is explained in the general setting for notational brevity and clarity. In Section 4 it will be applied to the transpiration cooling problem. \\ \\
	In the following section the finite dimensional optimization parameter is $\theta \in \mathbb{R}^{N_\theta}$. Let $(\Omega, F(\Omega), \mathbb{P})$ be a probability space and $\boldsymbol{\bar{\xi}}:\Omega \rightarrow \mathbb{R}^d$ a RV with Lebesque probability density $\Pi$. For a deterministic function $\bar{f}: \mathbb{R}^d \rightarrow \mathbb{R}^{N_f}$ we have in case of random input $\bar{f}(\bar{\xi}(\omega))$ for $\omega \in \Omega$. A change of variables $\bar{\xi}(w) \rightarrow \xi$ gives the considered representation $\bar{f}(\xi): \mathbb{R}^d \rightarrow \mathbb{R}^N_f$.
	
	\subsection{Problem formulation} 
	In the considered case we denote by $f_1: \mathbb{R}^{N_\theta} \times \mathbb{R}^d \rightarrow \mathbb{R}$ a forward model depending on an optimization parameter $\theta$ as well as the RV $\boldsymbol{\xi} = \boldsymbol{\bar{\xi}} \in \mathbb{R}$. Similarly, we assume that $f_2: \mathbb{R}^{N_\theta} \times \mathbb{R}^d \rightarrow \mathbb{R}$ models possible point wise inequalities. \\ \\
	As in a Bayesian framework, we assume given prior information as probability density of the parameter $\Pi_{prior}(\theta)$. Let $f_1$ be observable with RV $\boldsymbol{f_1^{data}}=f_1(\xi^*, \theta^*)+\boldsymbol{\eta}$ with variance $\sigma_l^2$, where $\boldsymbol{\eta}$ is white noise, $\theta^* \in \mathbb{R}^{N_\theta}$ is an unknown parameter, and $\xi^* \in \mathbb{R}^d$ an unknown realization. As in e.g. \cite{Wu2019}, the Likelihood function $\Pi_l: \mathbb{R}^{N_f} \rightarrow \mathbb{R}$ is defined by
	\begin{equation}
		\Pi_l(f_1^{data}|\theta) := \frac{1}{\sqrt{2 \pi} \sigma_l} \cdot e^{\frac{-1}{2N\sigma_l^2} \sum_{i=1}^N (f_{1,i}^{data}-f_1(\mathbb{E}_\xi, \theta))^2}.
		\label{likelihood}
	\end{equation}
	Here, $f_{1,i}^{data}$ are $N$ realizations of RV $\boldsymbol{f_1^{data}}$. The model output $f_1$ is computed with respect to the expectation $\mathbb{E}_\xi=\int \xi \Pi_\xi(\xi) d\xi$, as for the Bayesian Inversion only the parameter $\boldsymbol{\theta}$ is optimized. 
	The optimization of this parameter has the goal of finding realizations which will be used in the operation of the cooling mechanism. For example, the mass flow of the coolant can be adjusted. Since an optimality for a single parameter is not desired from an application point of view, we are interested in the optimal probability distribution $\Pi(\theta)$ of this parameter $\theta$. The associated RV is $\boldsymbol{\theta}$.\\
	In contrast to that we have parametric uncertainties, labeled $\boldsymbol{\xi}$. The porosity, for example, is a RV as the manufacturing process is inherently uncertain and thus not controllable.

	These parametric uncertainties come into play as for some fixed threshold $\beta$ the risk that $f_2(\boldsymbol{\xi}; \theta) \leq \beta$ is not satisfied in face of the parametric uncertainties $\boldsymbol{\xi}$ for the optimized realization $\theta$ should be lower than $\alpha$. \\

	Thus, the goal is to determine the constrained posterior $\Pi^c_{post}(\theta)$ such that the probability $\mathbb{P}_{\xi}(f_2(\boldsymbol{\xi}; \theta) \leq \beta) \geq \alpha$ a.s. w.r.t. $\boldsymbol{\theta}$. Here, $\alpha$ is our risk. We propose now to obtain such a (posterior) distribution $\Pi^c_{post}(\theta)$ by restricting the admissible set for $\theta \in \mathbb{R}^{N_\theta}$ such that the constraint is fulfilled: Define 
	\begin{equation}
		\mathcal{S} := \{ \theta \in  \mathbb{R}^{N_\theta}: \mathbb{P}_{\xi}(f_2(\boldsymbol{\xi}; \theta) \leq \beta) \geq \alpha \}
		\label{subset}
	\end{equation}
	as the subset for which the constraint is fulfilled. Then, the constrained posterior can be expressed as 

	\begin{equation}
		\Pi^c_{post}(\theta) = c \cdot \mathcal{X}_{\mathcal{S}}(\theta) \cdot \Pi_{prior}(\theta) \cdot \Pi_l(f_1^{data}|\theta)
		\label{g_optimization}
	\end{equation}
	where $\mathcal{X}_{\mathcal{S}}$ is the characteristic function on the set ${\mathcal{S}}$ and $c$ is a normalization constant. \\
	
	Numerically, the problem (\ref{g_optimization}) requires an efficient description of the set ${\mathcal{S}}$ of the constraints.
	We propose using a gPC expansion of $f_2=\sum_{i=0}^\infty \hat{f}_{2,i} \Phi_i(\xi)$ in $\xi$. This leads for any fixed realization $\theta \in \mathbb{R}^{N_\theta}$ to
	\begin{equation}
		\mathbb{P}_{\boldsymbol{\xi}}(f_2(\boldsymbol{\xi}; \theta) \leq \beta) = \int_{\mathbb{R}^d} \left \{ \begin{matrix} 1 &\text{ if }  \sum_{i=0}^\infty \hat{f_2}_i(\theta) \Phi_i(\xi) \leq \beta \\ 0 &\text{ otherwise} \end{matrix} \right \}  \Pi(\xi) d\xi.
		\label{g_constraint_int}
	\end{equation}
	
	Hence, the set $\mathcal{S}$ is given by $\mathcal{S} = \{ \theta \in \mathbb{R}^{N_\theta}: \int_{\mathbb{R}^d} \{\} \Pi(\xi) d\xi \geq \alpha \}$.
	
	\begin{remark} 
		In a Bayesian setting, an expansion of $f_2$ in $(\boldsymbol{\xi}, \boldsymbol{\theta})$ would return a global cheap-to-evaluate meta model. However, this is not accurate as only prior knowledge of the optimization parameter is known \cite{lu2015limitations} and thus no true input probability distribution needed for propagating the uncertainties is available. Therefore, here the gPC expansion still depends on realizations $\theta$ which comes at a greater computational cost but higher accuracy.
	\end{remark}

	Next, we aim to develop a numerical method for computing (\ref{g_optimization}).

	\subsubsection{Constrained Random Walk Markov Chain Monte Carlo}
	Markov Chain Monte Carlo (MCMC) \cite{brooks2011handbook} methods are frequently employed to effectively sample posterior distributions. \\ 
	The random walk MCMC (also known as Metropolis-Hastings-Algorithm) \cite{Kaipio2005} is one method to approximate the posterior probability distribution up to a normalizing constant by creating a Markov Chain of samples of the posterior. Here, candidate points are suggested based on a proposal distribution and then either accepted or rejected with a probability given by the acceptance ratio. In order to derive this ratio, the posterior at both the current sample and the proposed candidate is evaluated. This procedure is repeated, until enough samples are collected so that the Markov Chain converges to the stationary distribution. \\
	The standard algorithm is modified to incorporate the constraint by further using an indicator function to compute the ratio, following the approach described above from Gardner and Gelbert \cite{Gardner2014, Gelbert2015}. The pseudo-code is given in Algorithm \ref{cMCMC_alg} as an extension to the unconstrained algorithm presented in \cite{Kaipio2005}, and labeled constrained Random Walk Markov Chain Monte Carlo (cRW). \\
	
	In order to describe the method it is sufficient to state the transition kernel $K: \mathbb{R}^{N_{\theta}} \rightarrow \mathbb{R}^{\geq 0}$:
	\begin{equation}
		K(\theta, \theta^*)=\alpha(\theta, \theta^*) , \, \, \, \, \alpha(\theta, \theta^*) = \text{min}\left\{1, \frac{\Pi_{prior}(\theta^*) \cdot \Pi_{l}(\theta^*)}{\Pi_{prior}(\theta) \cdot \Pi_{l}(\theta)} \cdot \chi_\mathcal{S}(\theta^*)\right\},
		\label{kernel}
	\end{equation}
	where $\theta$ is a current sample and $\theta^*$ is a candidate point proposed by the random walk. The indicator function $\chi_\mathcal{S}$ and $\mathcal{S}$ are defined by eq. (\ref{subset}).
	\begin{theorem}
		The cRW transition kernel $K$ \textup{(\ref{kernel})} satisfies the detailed balance condition with target measure $\Pi^c_{post}(\theta)$ \textup{(\ref{g_optimization})} given a feasible initial sample $\theta$.
	\end{theorem}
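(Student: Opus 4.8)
The plan is to verify the detailed balance condition $\Pi^c_{post}(\theta)\,K(\theta,\theta^*) = \Pi^c_{post}(\theta^*)\,K(\theta^*,\theta)$ directly. Since the proposal generated by the random walk is symmetric, its density cancels on both sides, so it suffices to establish the identity
\[
\Pi^c_{post}(\theta)\,\alpha(\theta,\theta^*) = \Pi^c_{post}(\theta^*)\,\alpha(\theta^*,\theta)
\]
for an off-diagonal move $\theta \neq \theta^*$; the diagonal (rejection) contribution is automatically symmetric and needs no separate argument. To lighten the notation I would write $g(\theta) := \Pi_{prior}(\theta)\,\Pi_l(\theta)$, so that by (\ref{g_optimization}) the target reads $\Pi^c_{post}(\theta) = c\,\chi_\mathcal{S}(\theta)\,g(\theta)$ and the acceptance probability (\ref{kernel}) becomes $\alpha(\theta,\theta^*) = \min\{1,\, (g(\theta^*)/g(\theta))\,\chi_\mathcal{S}(\theta^*)\}$.

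First I would record that feasibility is preserved along the chain. Because the factor $\chi_\mathcal{S}(\theta^*)$ sits inside the acceptance ratio, any candidate $\theta^* \notin \mathcal{S}$ is accepted with probability zero; hence, starting from a feasible $\theta \in \mathcal{S}$, every realized current sample again lies in $\mathcal{S}$. This inductive observation is precisely what the hypothesis of a feasible initial sample supplies, and it licenses me to assume $\chi_\mathcal{S}(\theta) = 1$ whenever $\theta$ is the current state.

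With $\chi_\mathcal{S}(\theta) = 1$ fixed, I would split into two cases according to the status of the candidate. If $\theta^* \in \mathcal{S}$, then $\chi_\mathcal{S}(\theta^*) = 1$ and the left-hand side becomes $c\,g(\theta)\,\min\{1,\, g(\theta^*)/g(\theta)\} = c\,\min\{g(\theta),\, g(\theta^*)\}$; performing the same computation with the roles of $\theta$ and $\theta^*$ exchanged shows the right-hand side equals $c\,\min\{g(\theta^*),\, g(\theta)\}$, and the two agree by symmetry of the minimum. If instead $\theta^* \notin \mathcal{S}$, then $\chi_\mathcal{S}(\theta^*) = 0$ forces $\alpha(\theta,\theta^*) = 0$, so the left-hand side vanishes, while the right-hand side vanishes because $\Pi^c_{post}(\theta^*) = c\,\chi_\mathcal{S}(\theta^*)\,g(\theta^*) = 0$; both sides are zero and balance trivially.

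The step I expect to require the most care is the bookkeeping of the two characteristic functions: the factor $\chi_\mathcal{S}(\theta^*)$ appearing inside the acceptance ratio must be reconciled with the factor $\chi_\mathcal{S}(\theta)$ that defines the support of the target in (\ref{g_optimization}). The feasibility-preservation argument is exactly what makes these consistent — it guarantees that the support restriction on the current state is automatically met, so that $\chi_\mathcal{S}(\theta)$ drops out for the current sample while $\chi_\mathcal{S}(\theta^*)$ correctly kills transitions into the infeasible region. Once this is in place, the remainder is the classical Metropolis--Hastings identity $g(\theta)\,\min\{1,\, g(\theta^*)/g(\theta)\} = \min\{g(\theta),\, g(\theta^*)\}$, and detailed balance with respect to $\Pi^c_{post}$ follows.
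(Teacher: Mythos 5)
Your proof is correct and follows essentially the same route as the paper's: a case analysis on the values of $\chi_\mathcal{S}$ at the current state and the candidate, reducing the feasible--feasible case to the standard Metropolis--Hastings balance identity and observing that both sides vanish when the candidate is infeasible. If anything, your version is slightly more careful than the paper's, since you write detailed balance directly against the constrained target $\Pi^c_{post} = c\,\chi_\mathcal{S}\,\Pi_{prior}\,\Pi_l$ and make the feasibility-preservation induction (the role of the feasible initial sample) explicit, whereas the paper states the balance equation with the unconstrained density and leaves these points implicit.
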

	
	\begin{proof}
		The detailed balance equation reads as 
		\begin{equation}
			\Pi_{prior}(\theta^*) \cdot \Pi_l(f_1^{data}|\theta^*) \cdot K(\theta^*, \theta) = \Pi_{prior}(\theta) \cdot \Pi_l(f_1^{data}|\theta) \cdot K(\theta, \theta^*).
			\label{db}
		\end{equation}
		For any two samples, where $\chi(\theta^*)=\chi(\theta)$ the above statement is true. In case of $\chi(\theta^*)=\chi(\theta)=1$, (\ref{db}) simplifies to the standard detailed balance equation for random walk. If the indicator function for both samples is zero, so is the probability of either sampling one of them zero. \\
		In case $\chi(\theta^*)=0 \text{ and } \chi(\theta)=1$, the suggested point is not within the feasible region of the \textit{constrained} posterior $\Pi^c_{post}$, and will thus not be sampled. The rejection will lead to a repetition of the current sample $\theta$.
	\end{proof}

	\subsubsection{Alternative Solution Methods for Numerical Comparisons}
	Alternative Monte Carlo based sampling methods use the gradient information of the posterior to propose new candidate points with the goal to achieve higher acceptence rates. We consider two different gradient descent algorithms, the Hamiltonian Monte Carlo (HMC) \cite{wang2013adaptive} and, as an interacting particle system, the Stein Variational Gradient Descent (SVGD) introduced by Liu and Wang \cite{liu2016stein}. \\
	
	In this work, in order to treat the constrained posterior, we propose to introduce a penalty term. The gradient of the log-posterior is therefore supplemented by a modification $\nabla \mathcal{H}$, such that the gradient is given by
	\begin{subequations} 
		\begin{equation}
			\nabla \mathcal{G}(\theta)=\nabla \log{(\Pi_{prior}(\theta) \cdot \Pi_{l}(\theta))}+ \nabla \mathcal{H}(\theta),
			\label{Lagrangian}
		\end{equation}
		with
		\begin{equation}
			\nabla \mathcal{H}(\theta)= 
			\left\{ \begin{matrix}
				\delta &\text{ if } \theta \notin \mathcal{S} \\
				0 &\text{ otherwise} \end{matrix} \right.,
			\label{heaviside}
		\end{equation}
		where $\delta$ is a hyperparameter used to avoid sampling in the unfeasible area.
	\end{subequations}
	The pseudo-codes for the extended methods are given in Algorithm \ref{cHMC_alg} and Algorithm \ref{cSVGD_alg} as extensions of the unconstrained algorithms presented in \cite{wang2013adaptive} and \cite{liu2016stein}. They are labeled constrained Hamiltonian Monte Carlo (cHMC) and constrained Stein Variational Gradient Descent (cSVGD), respectively. \\
	It should be noted the detailed balance equation is not fulfilled for the cHMC algorithm due to the modification.
	
	\section{Computational Results}
	\subsection{Space-homogeneous transpiration cooling}
	For model 1, flow through the porous medium is simulated. Two stochastic input parameters are considered, which are both assumed to be normally distributed with $\phi \sim \mathcal{N}(\phi_0, \sigma_\phi^2)$ and $q \sim \mathcal{N}(q_0, \sigma_q^2)$. The prior of the Reynolds number is assumed to be a Gaussian prior with $\mathcal{N}(\mu_p, \sigma_p^2)$. While the pressure is observable, the temperature is constrained. \\
	
	For the cHMC and cSVGD, the use of the log-posterior avoids overflow and underflow in the computation of the ratio for the acceptance rate \cite{Somersalo2007}. Furthermore, the particles are following a smoothed gradient. For this model, the gradient can be derived analytically and is given by 
	
	\begin{equation}
		\begin{split}
			&\nabla_{Re} \log{\left(\Pi_{prior}\left(Re\right) \, \frac{1}{N} \prod\limits_{i = 0}^{N} \Pi_l\left(p_i(\phi_0)|Re\right)\right)} \\
			& = - \frac{(Re-\mu)}{\sigma_{p}^2} + \left( \nabla_{Re} p(Re) \right) \frac{1}{N \sigma_l^2} \sum\limits_{i=0}^{N} \left(p_i^{data} - p(Re) \right), \\
		\end{split}
		\label{gradient1}
	\end{equation}
	where $\left( \nabla_{Re} p(Re) \right)$ is evaluated using first-order finite differences. As the derivation with respect to a single input, the Reynolds number, is considered, two evaluations of the ODE are needed for finite differences. Therefore, using an adjoint gradient would not be more efficient to use. \\
	
	Figure \ref{Hist_M1} shows the sampled constrained posterior probability distribution using cRW, cHMC and cSVGD. The results reproduce the true posterior. It can be seen that for lower Reynold numbers the temperature constraint is not fulfilled as not enough coolant is injected to decrease the temperature at the interface with the desired probability of $95 \% $. \\
	While the cRW is using a hard constraint formulation to sample the constrained posterior, the cSVGD and cMHA violate the constraint requirement based on the choice of the penalization parameter $\delta$. Therefore, if a weak penalization is chosen in (\ref{heaviside}), not all samples might be within the feasible region. On the other hand, the modification alters the shape of the posterior and if $\delta$ is be chosen to be very large, it will affect the sampling at the edge of the feasible region. The non-feasible samples may be removed by checking the constraint in a post-processing step, however this new distribution does not resemble the Markov Chain.
	
	{\begin{figure}[H]
			\centering
			\begin{minipage}[t]{0.31\linewidth}
				\includegraphics[width=1.1\linewidth]{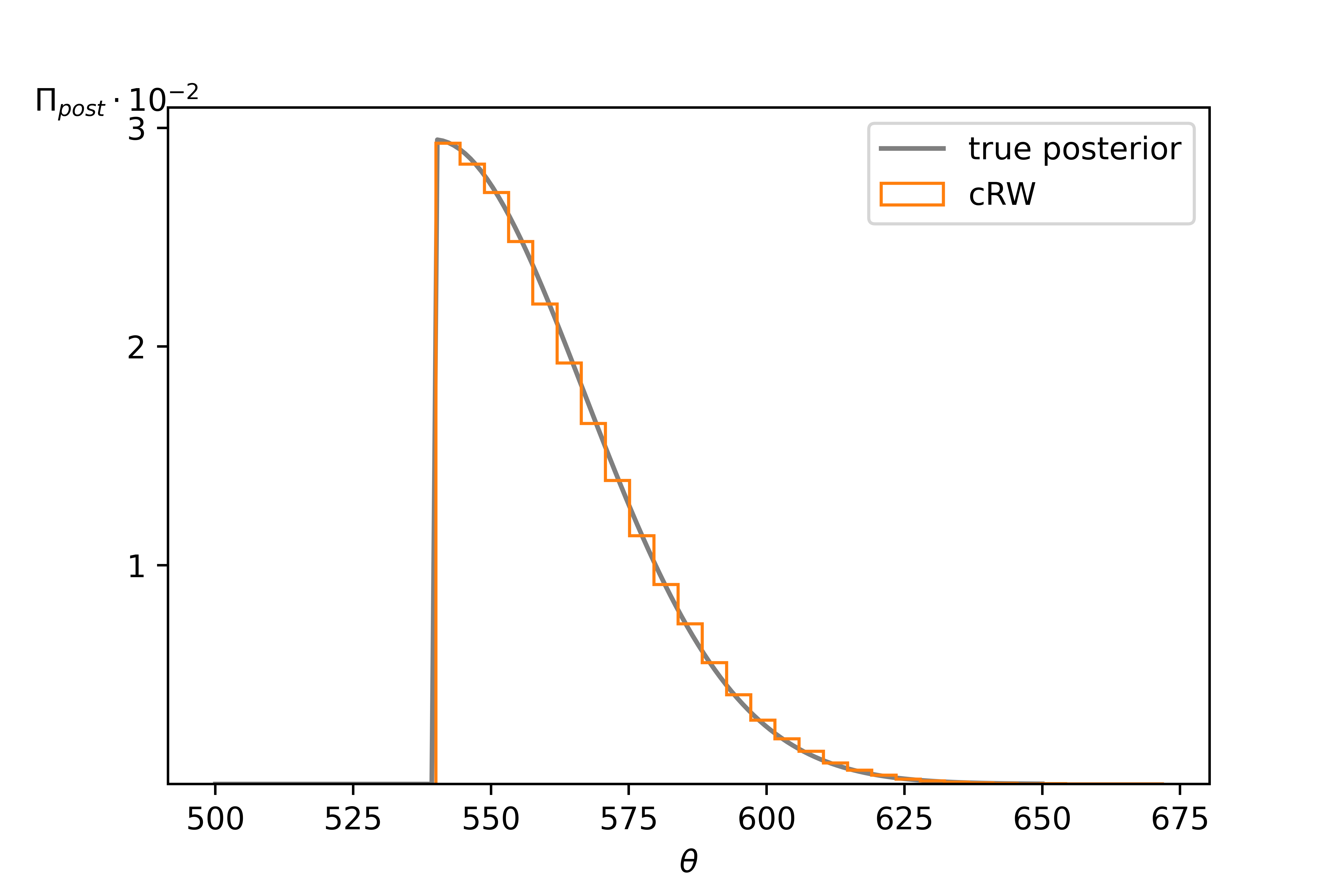}
			\end{minipage}
			\begin{minipage}[t]{0.31\linewidth}
				\centering
				\includegraphics[width=1.1\linewidth]{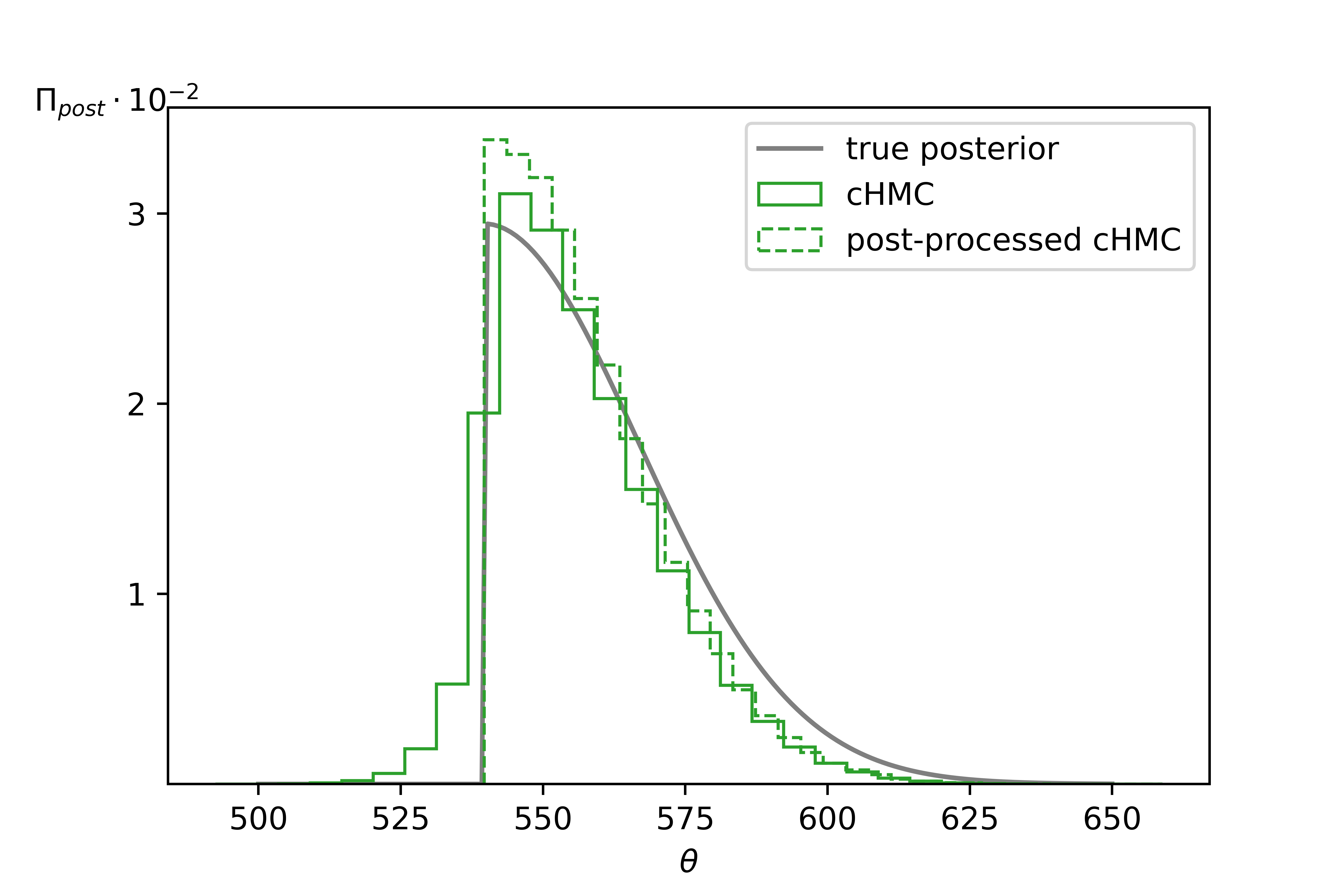}
			\end{minipage}	
			\begin{minipage}[t]{0.31\linewidth}
				\centering
				\includegraphics[width=1.1\linewidth]{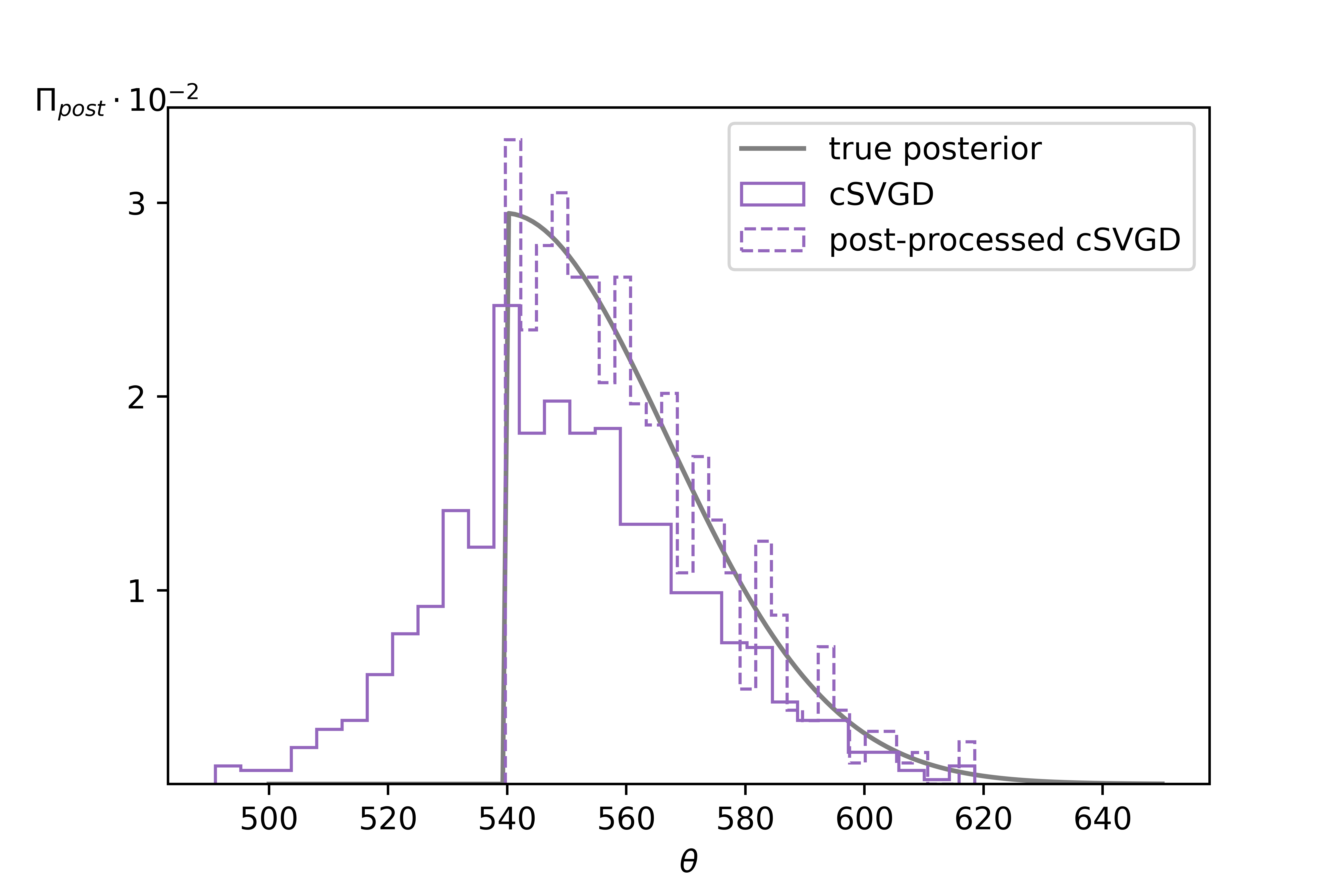}
			\end{minipage}	
			\caption{Histograms Model 1: The results of the cRW, the cHMC and the cSVGD can be seen from left to right. Additionally to the Markov Chain, post-processed versions are shown for the cHMC and cSVGSD. Here, unfeasible samples have been removed from the corresponding Markov Chains. For comparison, the true posterior (gray line) is plotted as well. It can be seen that for a Reynolds number less than $540$, the temperature constraint is not fulfilled. The set $\mathcal{S}$ is $=\{ \theta \geq 540\}$. }
			\label{Hist_M1}
	\end{figure} }

			The temperature is increasing as the mass flow of the coolant and thus the Reynold number increases. Therefore, Figure \ref{Hist_M1} shows that in this scenario, the chance constraint is concomitantly a box constraint. In such a case, a projected gradient method can be used instead of the modification. This circumvents the downsides of the modification as described above, where the shape of the posterior is altered, and samples without the feasible region might be removed in a post-processing step. However, in the general case given in (\ref{g_optimization}), the secondary condition does not necessarily resemble a box constraint and the use of a projected gradient method might not be possible. \\
			The projected gradient method has been combined with the Stein Variational Gradient Descent for demonstration purposes. In case where $\mathcal{S}=\{ \theta \geq \theta_{crit} \}$, we can reformulate (\ref{g_optimization}) as
			
			\begin{equation}
				\Pi_{prior}(\theta) \cdot \Pi_l(f_1^{data}|\theta) \text{ s.t. } \theta \geq \theta_{crit}.
			\end{equation}
			For every particle $\{\theta_k^j\}_{k=1}^m$ in the $j$th generation of the algorithm with a set of $m$ particles, the known SVGD proposal step is performed. The gradient of the unconstrained posterior at $\theta_k^j$ is then projected back onto the feasible region by a projection step
			\begin{equation}
				d_k^j=\text{Pr}(\theta_k^j + \nabla_\theta \log{\Pi_{prior}(\theta_k^j) \cdot \Pi_l(\theta_k^j)} ) - \theta_{k}^j. 
			\end{equation}	 
			Finally, the particles are then updated by $\theta_{k}^{j+1} = \theta_{k}^{j} +s^j \cdot d_k^j $, where $s$ is the current SVGD step size. \\ 
			
			When $\mathcal{S} \subset \mathbb{R}^{N_\theta}$ is the set such that $\theta \geq \theta_{crit}$, then the projection operator $\text{Pr}: \mathbb{R}^{N_\theta} \rightarrow \mathcal{S}$ \cite{Burke} yields the closest feasible point to $y := \theta_k^j + \nabla_\theta \log{\Pi_{post}(\theta_k^j)}$ by solving
			\begin{equation}
				\min_{\theta \in \mathcal{S}}{\frac{1}{2} || \theta - y||_2^2}.
				\label{proj}
			\end{equation}
			In this case, for any $y \notin \mathcal{S}$, (\ref{proj}) returns $\theta = \theta_{crit}$. The results for the projected SVGD method can be seen in Figure \ref{proj_SVGD}.
			
			{\begin{figure}[H]
					\centering
					\includegraphics[width=1\linewidth]{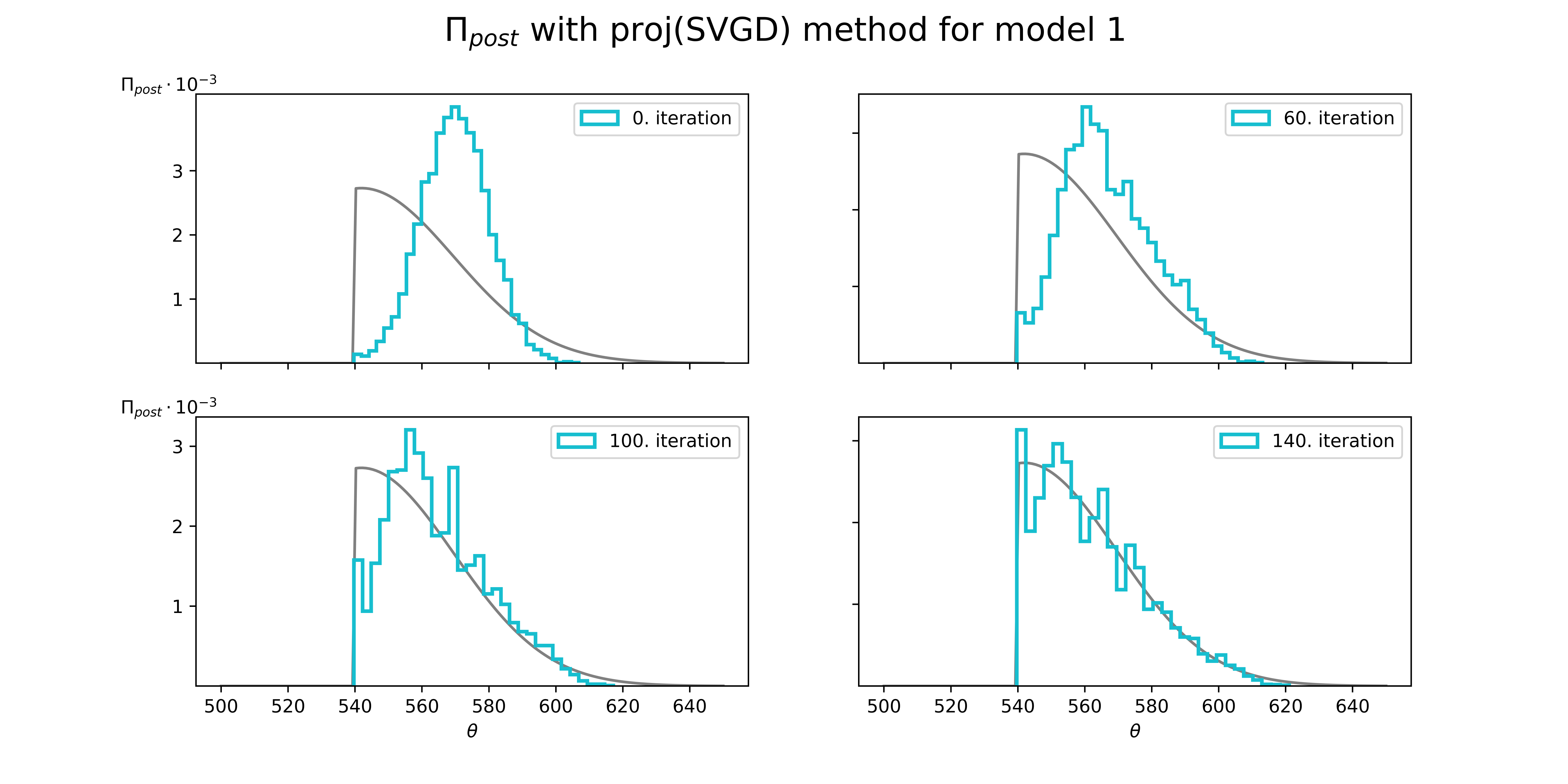}
					\caption{Histogram for the projected SVGD method. The evolution of the particle distribution can be seen from the initial distribution in the upper left plot over the $60$th and $100$th generation to the final distribution in the lower right plot. With this method, the constraint is not violated as the gradient is projected back onto the feasible region in every step. Again, the gray line shows the true posterior for comparison.}
					\label{proj_SVGD}
			\end{figure} }

			\subsection{Space-dependent transpiration cooling}
			
			For model 2, four different sections  are considered over the length of the interface, with two walls and two porous medium flows as it can be seen in Figure \ref{plot_constraint1}. As the uncertainty of the heat flux enters the temperature equations (\ref{2a}) and (\ref{2b}), the initial condition of (\ref{hd}) is stochastic for the porous medium sections. We assume a uniformly distributed prior with $Re \sim \mathcal{U}(300, 1000)$. As two flows with different porosities are considered, two separate Likelihood-functions are taken into account. The temperature constraint states that a temperature of $380 K$ should not be exceeded with a probability of $80 \%$. \\
			
			For reference, the posterior was computed for realizations of the Reynolds number and the stochastic variable $q$. Figure \ref{plot_contour} shows the contour plot. \\
			
			{\begin{figure}[H]
					\centering
					\includegraphics[width=0.8\linewidth]{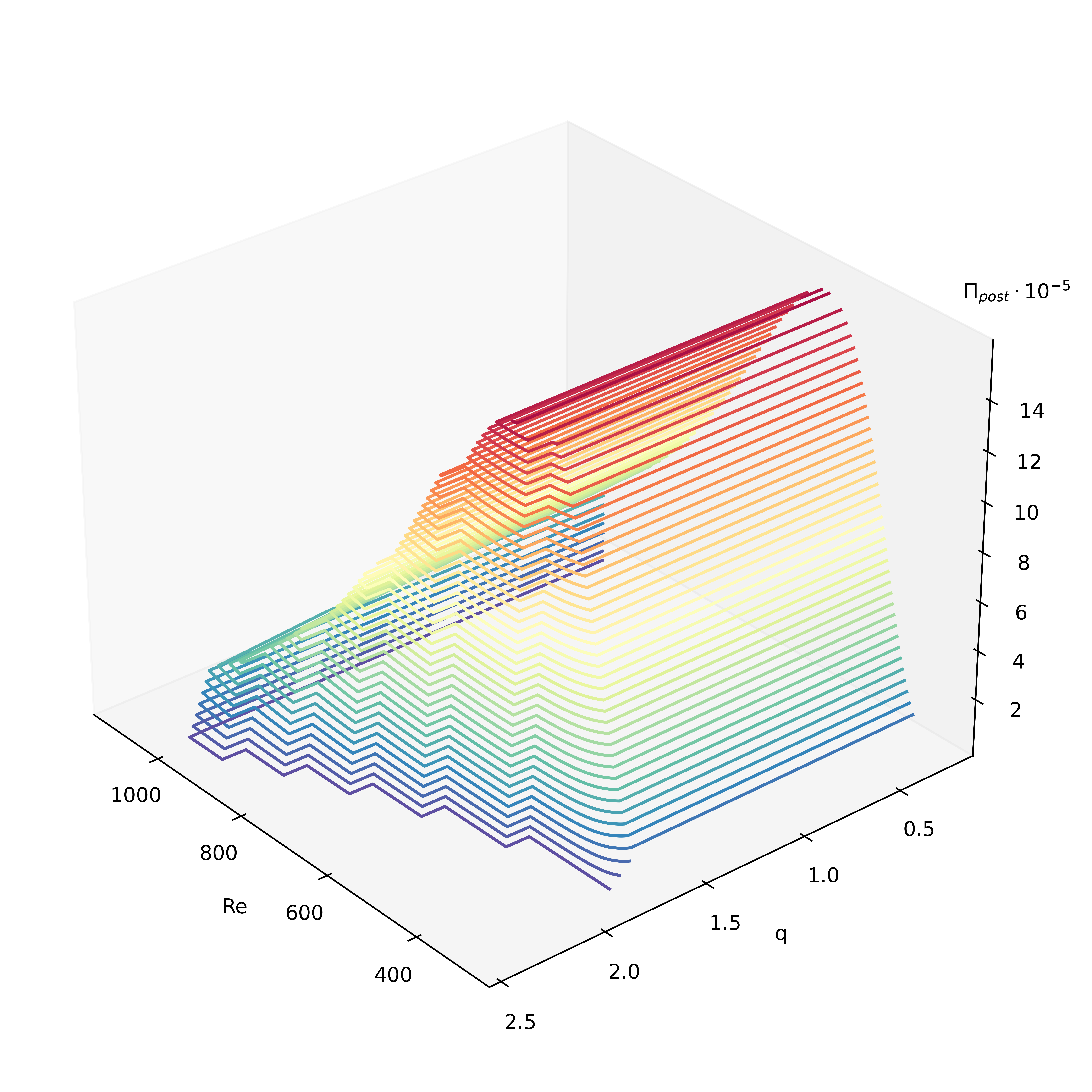}
					\caption{Reference Posterior: Computing the constrained posterior for different realizations of the optimization parameter and the stochastic variable, it can be seen that for lower heat fluxes and higher Reynolds numbers the constraint is fulfilled, as these combinations produce lower temperatures.  Here, the constraint is checked pointwise. The gPC-constraint in the sampling strategies below, however, ensures to erase any Reynolds numbers for which the probability is too high that the normally distributed heat flux results in exceeding the maximum temperature.}
					\label{plot_contour}
			\end{figure} }
			
			In Figure \ref{plot_constraint1}, the stochastic simulation is shown at the start time and $t=t_c$, on which the constraint is imposed for a Reynolds number $Re=405$.  It shows that the maximum temperature is clearly exceeded. In Figure \ref{plot_constraint2}  the expected value of the sampled Markov Chains is used for the same simulation. Here, the temperature constraint is fulfilled. The posterior probability distributions are shown in Figure \ref{Hist_M2}. For the gradient-based methods, the gradient in this case can be derived as 
			
			\begin{equation}
				\begin{split}
					&\nabla_{Re} \log{\left(\Pi_{prior}\left(Re\right) \, \frac{1}{N} \prod\limits_{i = 0}^{N} \Pi_l\left(p_i|Re\right)\right)} \\
					& = \left( \nabla_{Re} p(Re) \right) \cdot \left( \frac{1}{N \sigma_{0}^2} \sum\limits_{i=0}^{N}  \left(p_i^{data_0} - p(Re, \phi_0) \right) +\frac{1}{N \sigma_{1}^2} \sum\limits_{i=0}^{N}  \left(p_i^{data_1} - p(Re, \phi_1) \right) \right). 
				\end{split}
				\label{gradient2}
			\end{equation}
			The uniformly distributed prior with $Re \sim \mathcal{U}(300, 1000)$ is not differentiable at $Re=400$ and $Re=1000$. For the SVGD - that entirely relies on the gradient and the computation of the posterior probability at the particle positions itself is not part of the algorithm - an additional penalization in the same manner as for the inequality is imposed for $Re>1000$ to avoid sampling in that region. \\
			For the HMC, as $\log{ \Pi_{prior}(\theta^*) \cdot \Pi_{l}(\theta^*)}$ at the proposal $\theta^*$ is computed for the acceptance ratio, the prior is set to be $0 < \epsilon << 1$ for $Re < 400$ and $Re  > 1000$. In practice this leads to a rejection of proposal candidates for $Re \notin (300, 1000)$ using the cHMC, as the probability of the posterior is nearly zero due to the extremely small $\epsilon$. 
			
			{\begin{figure}[H]
					\centering
					\includegraphics[width=0.75\linewidth]{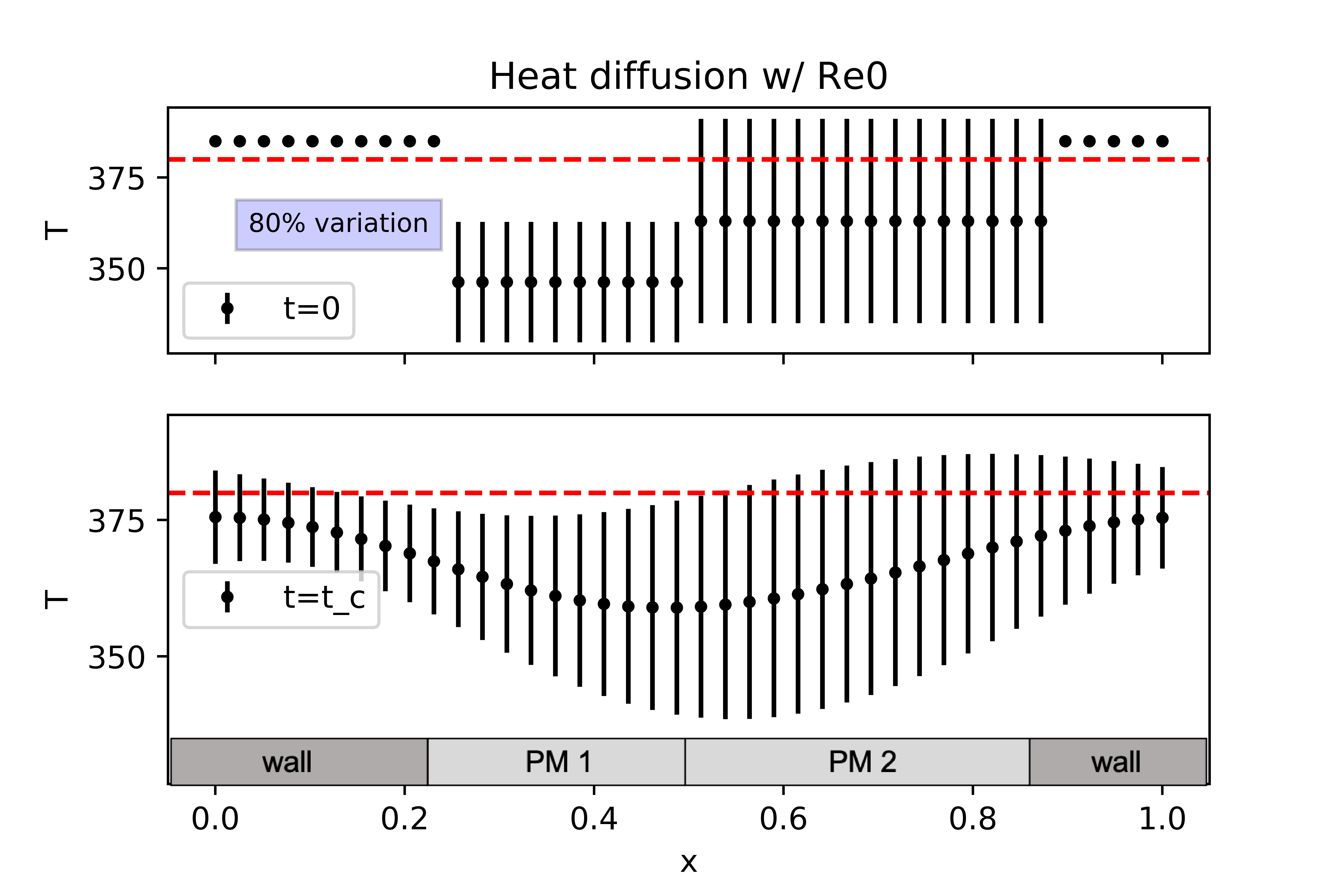}
					\caption{Constraint satisfaction: This figure shows the temperature distribution at time $t=0$ and $t=t_c$. The initial parameter value for the Reynolds number $Re=405$ of the deterministic simulation (see Appendix 1) was used to simulate the stochastic model. It is obvious that not only at $t=0$ the constraint is not fulfilled, but also at $t=t_c$ the temperature still exceeds the threshold with a probability greater than $80\%$. }
					\label{plot_constraint1}
			\end{figure} }
			
			{\begin{figure}[H]
					\centering
					\includegraphics[width=0.75\linewidth]{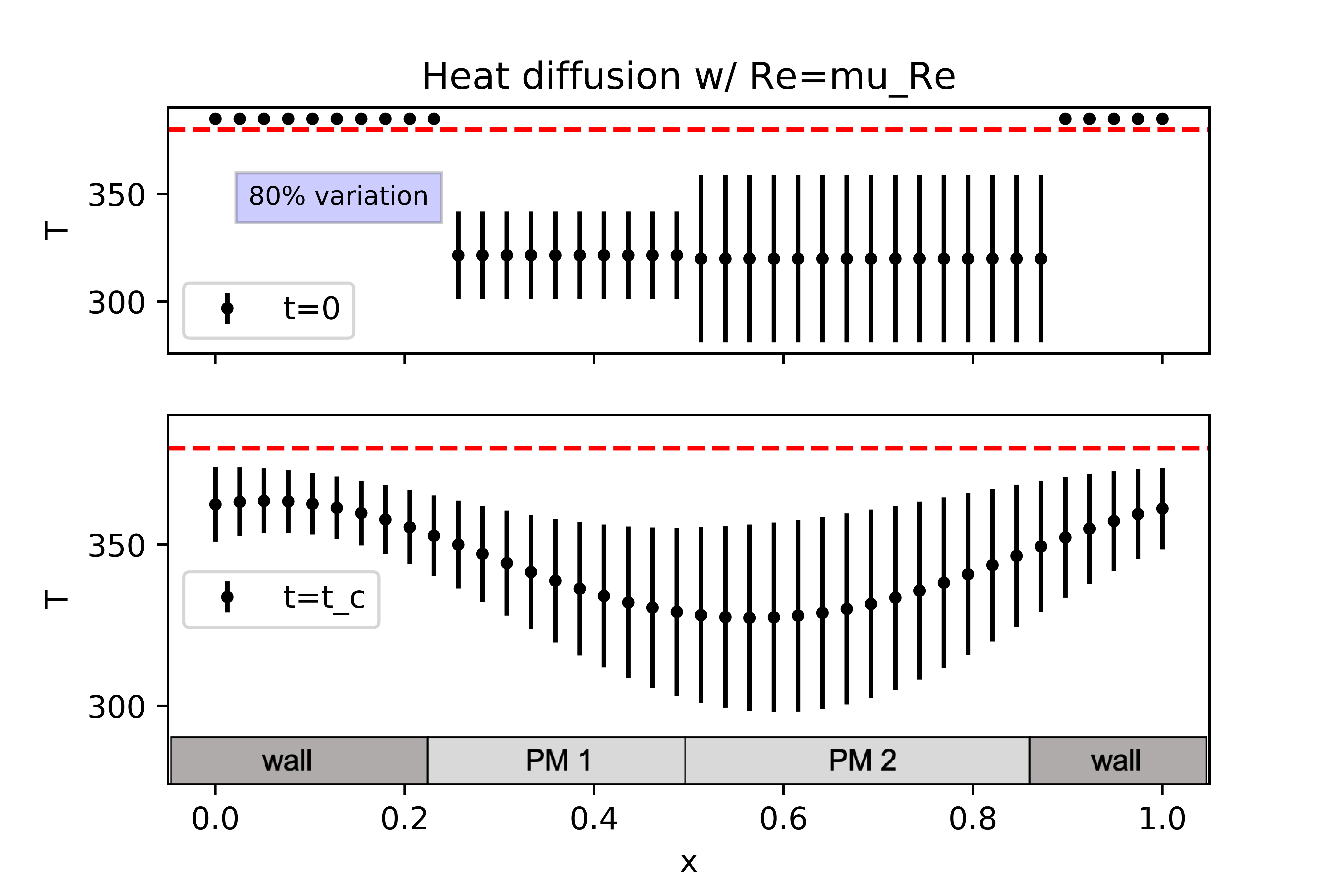}
					\caption{Constraint satisfaction: This figure shows the same design as Figure \ref{plot_constraint1} with the difference that here, the expected value of the constrained Monte Carlo methods is used to simulate the stochastic model. It can be seen that at $t=0$ the wall temperature is above the maximum temperature while enough coolant was injected to already achieve sufficient cooling right at the outflow of the porous medium. At time $t=t_c$, the temperature is below the maximum temperature everywhere with a probability of $80\%$.  }
					\label{plot_constraint2}
			\end{figure} }
			
			{ \begin{figure}[H]
					\centering
					\begin{minipage}[t]{0.31\linewidth}
						\includegraphics[width=1.1\linewidth]{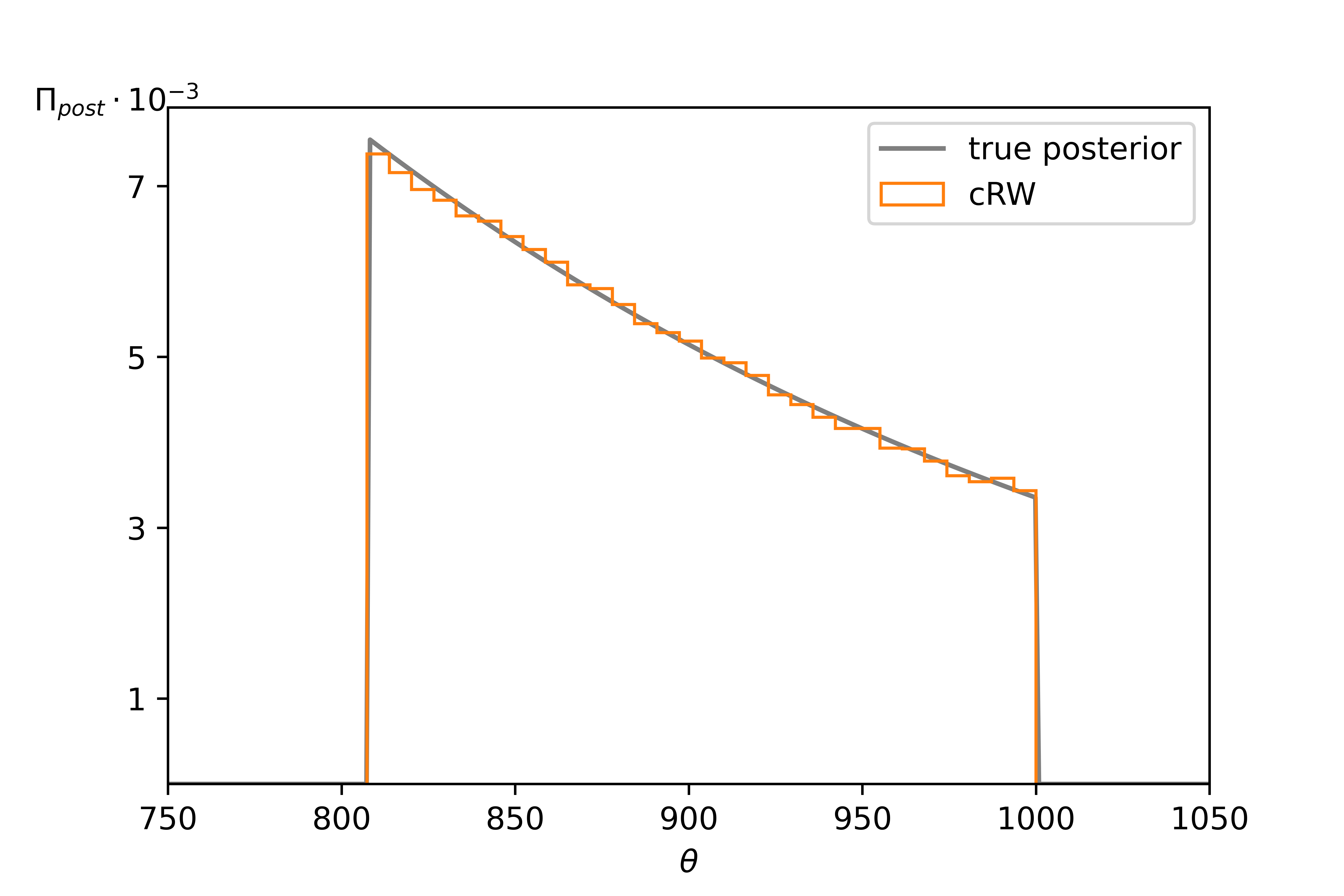}
					\end{minipage}
					\begin{minipage}[t]{0.31\linewidth}
						\includegraphics[width=1.1\linewidth]{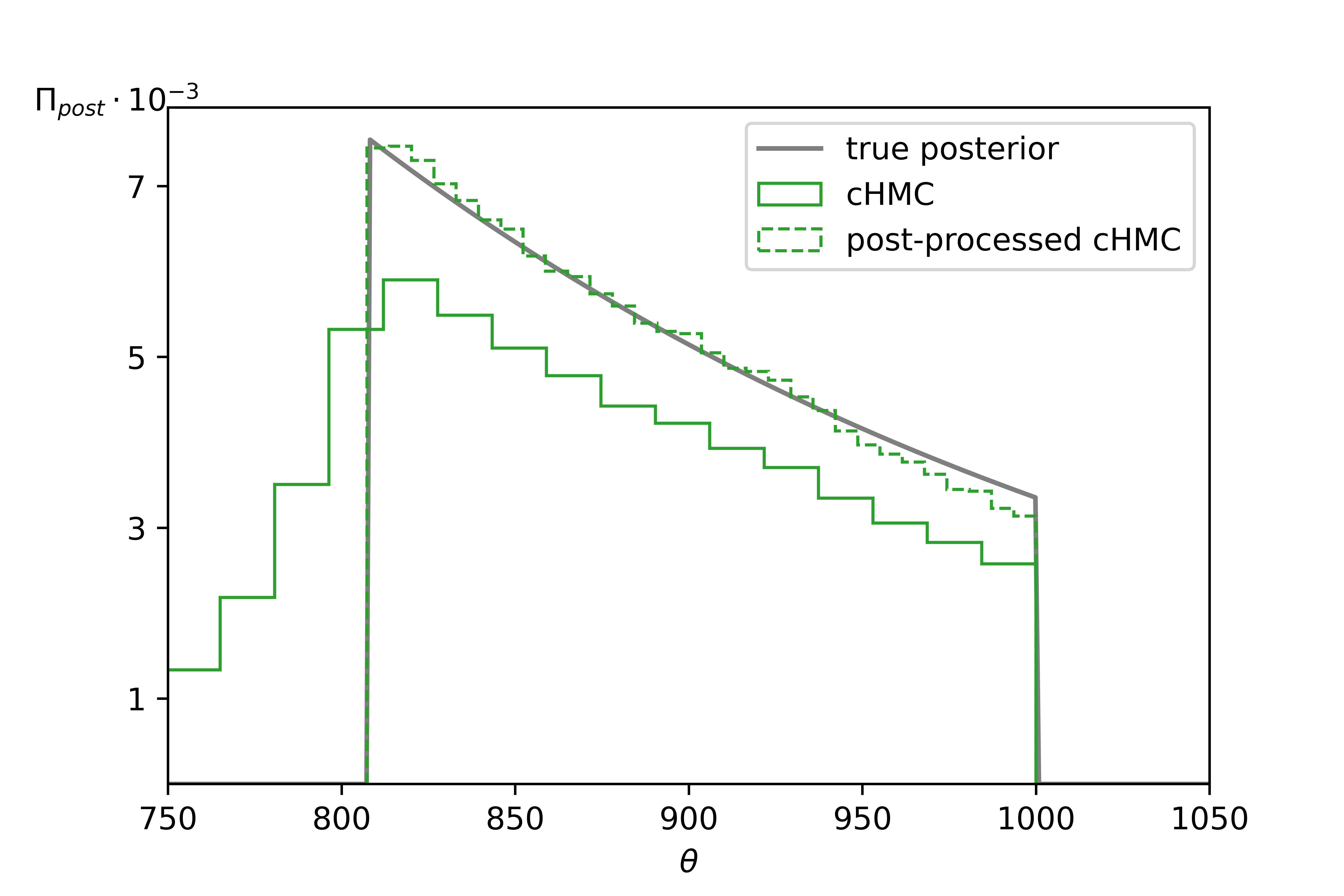}
					\end{minipage}	
					\begin{minipage}[t]{0.31\linewidth}
						\includegraphics[width=1.1\linewidth]{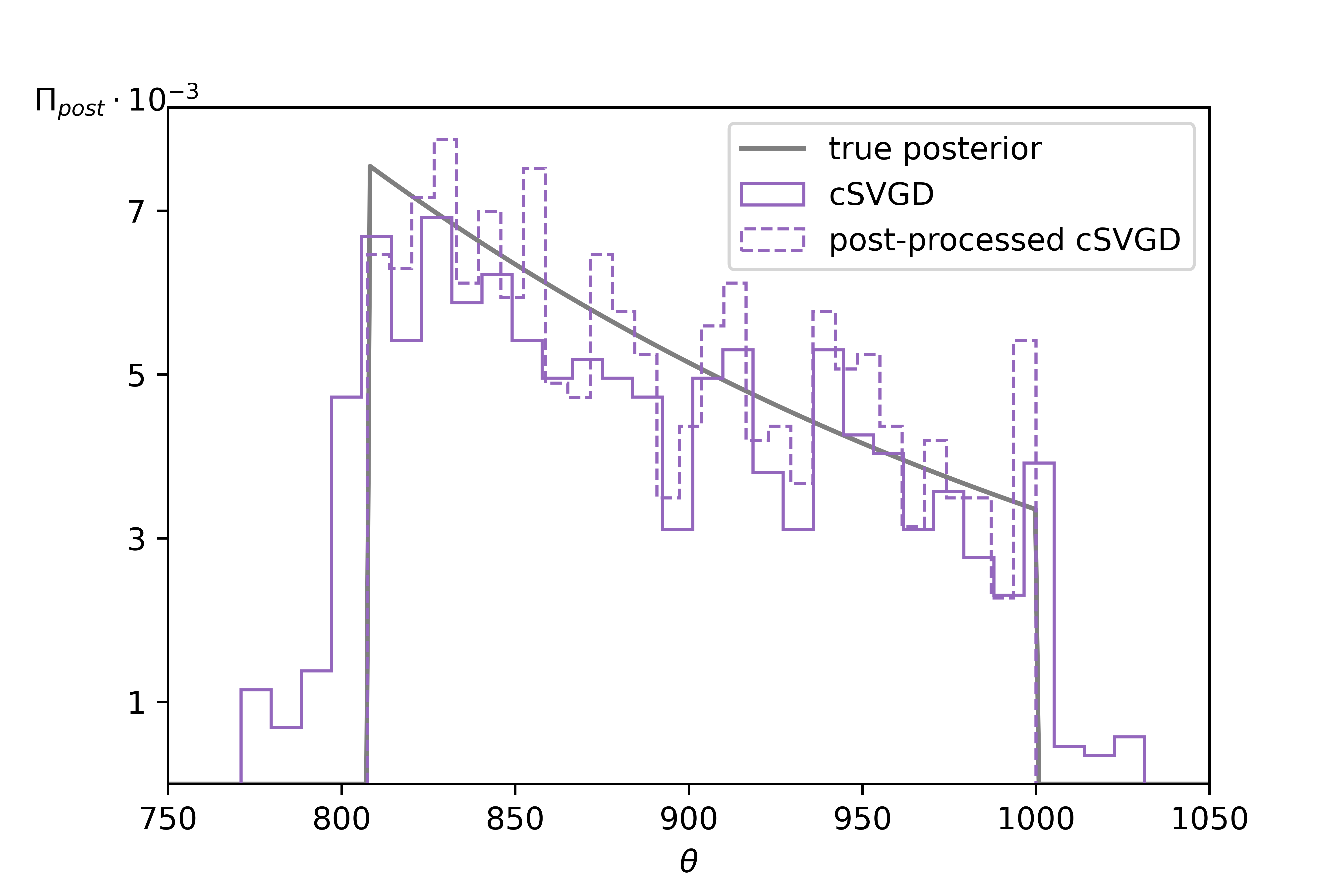}
					\end{minipage}
					\caption{Histograms Model 2. The histograms show results of the cRW, cHMC, and cSVGD, respectively. For both the cHMC and the cSVGD post-processed versions for which unfeasible samples are excluded are additionally plotted. }
					\label{Hist_M2}
			\end{figure} }
			
			\subsection{Space-dependent transpiration cooling with high-dimensional uncertain input}
			
			The space-dependent transpiration cooling model can be extended to the case with point wise uncertain heat flux over the length of the porous probe. For that, 60 independent RVs are all assumed to be normally distributed with known mean and variance according to simulation data of the heat flux from a deterministic coupled simulation setting with a hot gas flow developed in \cite{Dahmen2015} as it can be seen in Figure \ref{pointwise_input_uncertainty}. The uncertainties are propagated through the pores and form the initial conditions for the heat diffusion at the interface. \\
			
			The results for the cRW can be seen in Figure \ref{hist_M3}. 
			
			{\begin{figure}[H]
					\centering
					\includegraphics[width=0.8\linewidth]{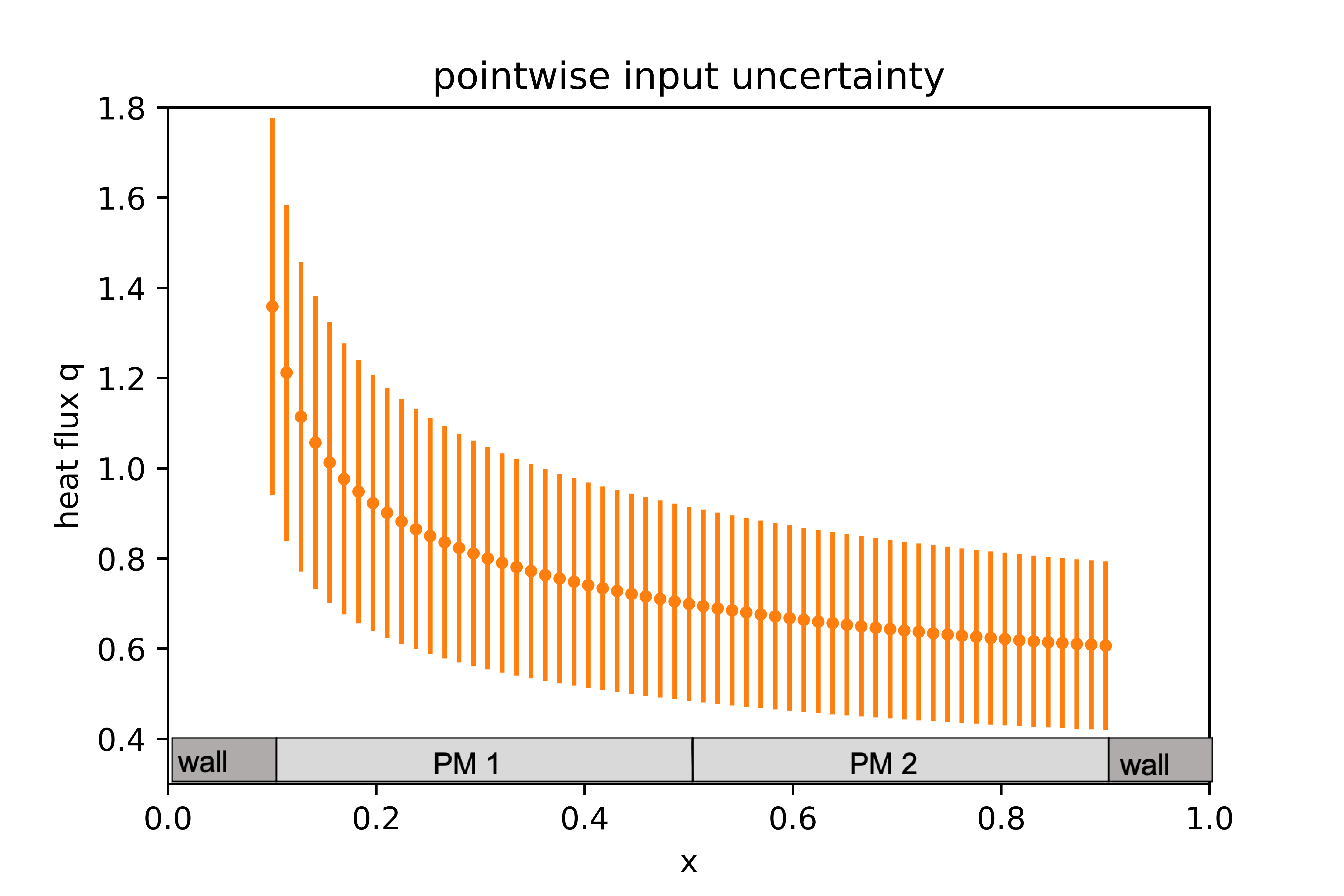}
					\caption{This figure shows the pointwise input uncertainty of the heat flux over the interface. As the RVs are assumed to be normally distributed, mean values and corresponding standard deviations are featured. }
					\label{pointwise_input_uncertainty}
			\end{figure} }
			
			{\begin{figure}[H]
					\centering
					\includegraphics[width=0.8\linewidth]{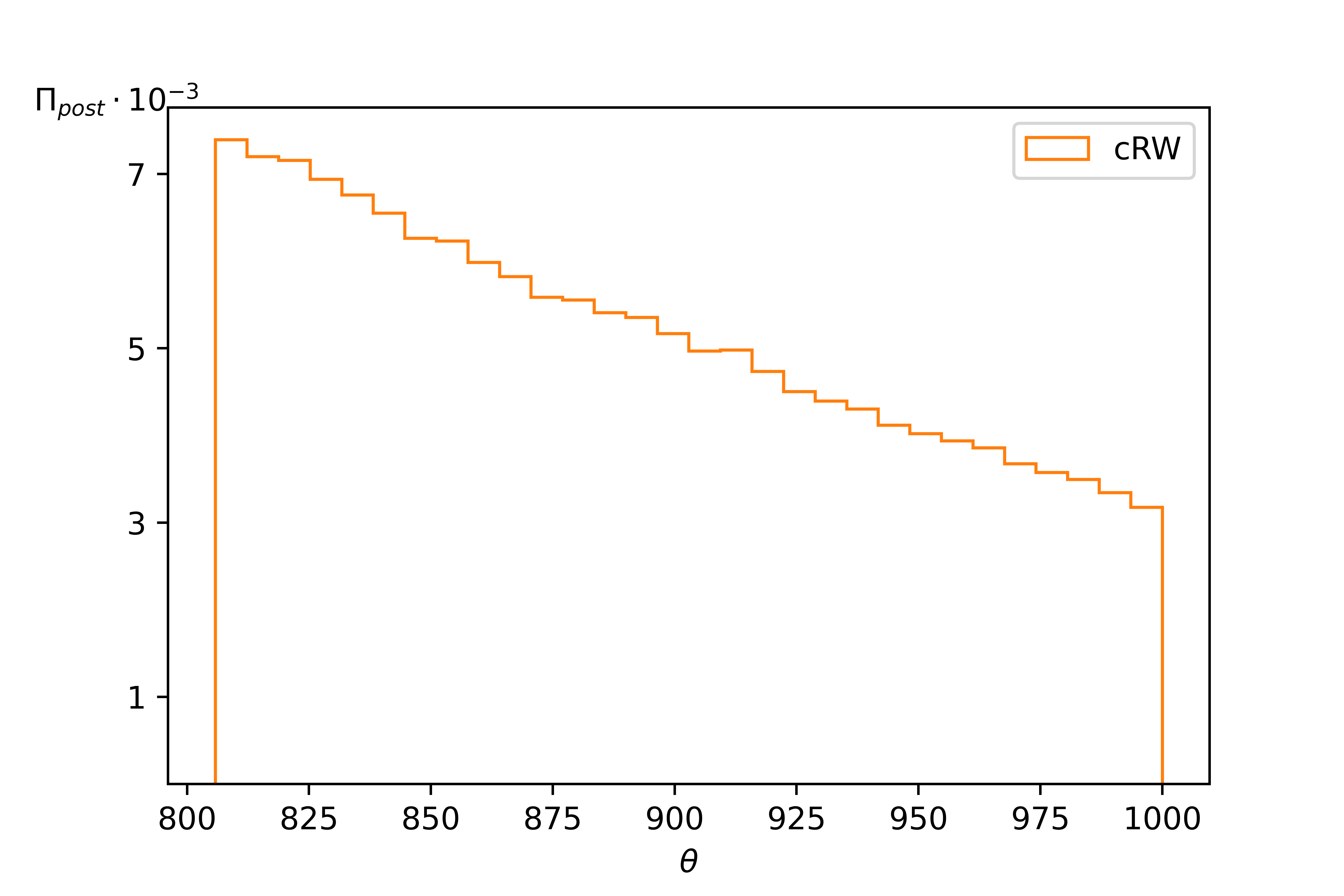}
					\caption{The histogram of the cRW method using 500000 samples for model 3 is shown. Here, $\mathcal{S} = \{ 806 \leq \theta \leq 1000\}$. }
					\label{hist_M3}
			\end{figure} }
			
			\subsection{Evaluation of the Monte Carlo based methods}
			The accuracy of the proposed methods depends on both the accuracy of the gPC expansion and the accuracy of the MCMC sampling strategies. The accuracy of the gPC expansion relate on the truncation order $P$ which needs to be high enough so that the function $f_2^{P}$ evaluated over $\xi$ represents the true function $f_2(\boldsymbol{\xi}; \theta)$. \\
			
			To compare the proposed methods, however, the accuracy of the MCMC methods should be considered. Estimates of the true posterior distributions have been derived for the first and second model, where the posterior was evaluated at selected points. Therefore, a measure of accuracy is the L2 error between the "true" posterior and the bars of the histograms. An example of the graphical evaluation can be seen in Figure \ref{example_L2}.
			
			{\begin{figure}[H]
					\centering
					\includegraphics[width=0.8\linewidth]{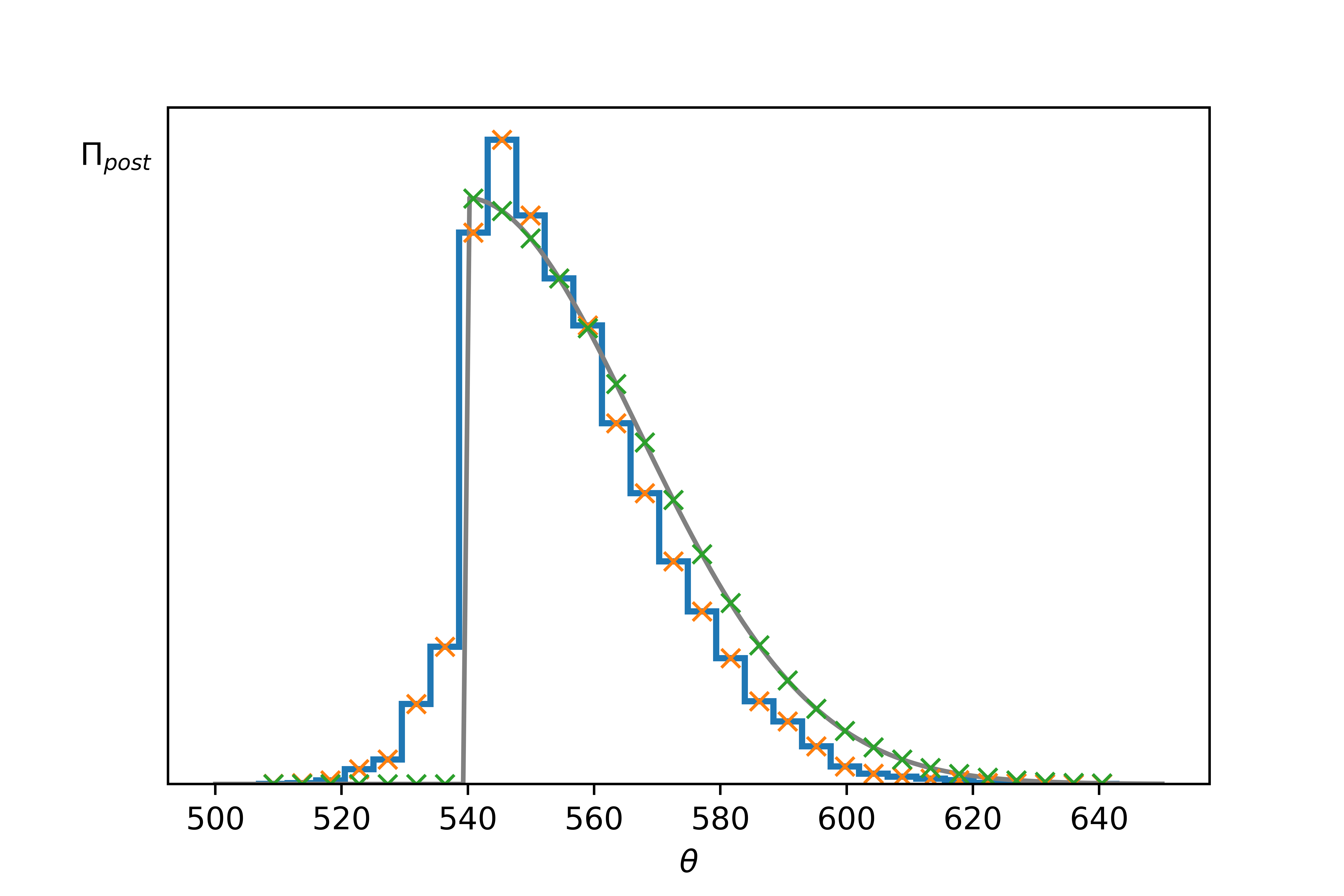}
					\caption{The relative $L^2$ error is shown. The histogram is the result of a simulation with $25000$ samples using the cHMC method for model 1. The bars of the histogram are shown with orange crosses at the middle of each bar. The corresponding point of the (estimated) true posterior is derived using interpolation (green crosses). The points are then used to compute the relative $L2$ error.}
					\label{example_L2}
			\end{figure} }
			
			For comparability between the methods, shared measurement points are needed. Here, we choose the number of samples ($N$) as measurement points. This compares the end-result of the algorithms which are equivalent long Markov Chains. For this measure, the number of samples for the particle-based SVGD are computed as the number of particles times the number of steps per particle as this the closest comparison to the overall number of samples. However, this in addition to the unused possible parallelization might unfavor the SVGD methods. \\
			While the quality of samples might profit from the gradient-based proposal step of the cSVGD and cHMC in contrast to the random choice of the cRW, the modification still allows the cSVGD and cHMC to sample without the feasible region which is expected to negatively influence the L2 norm at the same time. \\
			Furthermore, the evaluation of the gradient is time-consuming. It is likely that an evaluation of the gradient is dominant time-wise over the quality of the proposed sample. This is especially true in the case of cHMC, where the gradient is computed on top of the computation of the unconstrained posterior, instead of replacing this step (as in cSVGD). Therefore, the CPU time is tracked. For the pSVGD and cSVGD the parallelization is not used such that the CPU times are comparable to the others. All instances were solved with a computer with Intel(R) Xeon 8160 CPUs with 2.10 Ghz. However, we are mainly interested in the relative CPU times between the different algorithms rather than absolute CPU times. \\
			All in all, the comparison yields in the trade-off between costly, but well-selected candidates and fast, but random suggestions in the constrained case. \\
			The results are shown for model 1 and model 2 in Figure \ref{sample}. While all methods converge for very high number of samples, the cRW shows the smallest error. For the cSVGD and cHMC the post-processed chains, where all non-feasible samples are removed, are shown additionally. It can be seen that the performance does indeed increase when the non-feasible samples are removed. \\
			The SVGD and projected SVGD require a higher number of repetitions. Their performance might be underrepresented here, as the number of particles and the number of steps need to be balanced such that they compare to the number of suggested candidates. They do perform well when using a higher number of particles, which could be iterated in parallel. However, as they do not include a rejection step, this might be the best method to compare them to the other constrained methods. \\
			A comparison of the CPU times shows that the cHMC takes considerably longer than the other methods. Therefore, the quality gain from the gradient-information in the HMC does not automatically make it a favorable option.

			{\begin{figure}[H]
					\centering
					\begin{minipage}[t]{0.49\linewidth}
						\includegraphics[width=1.1\linewidth]{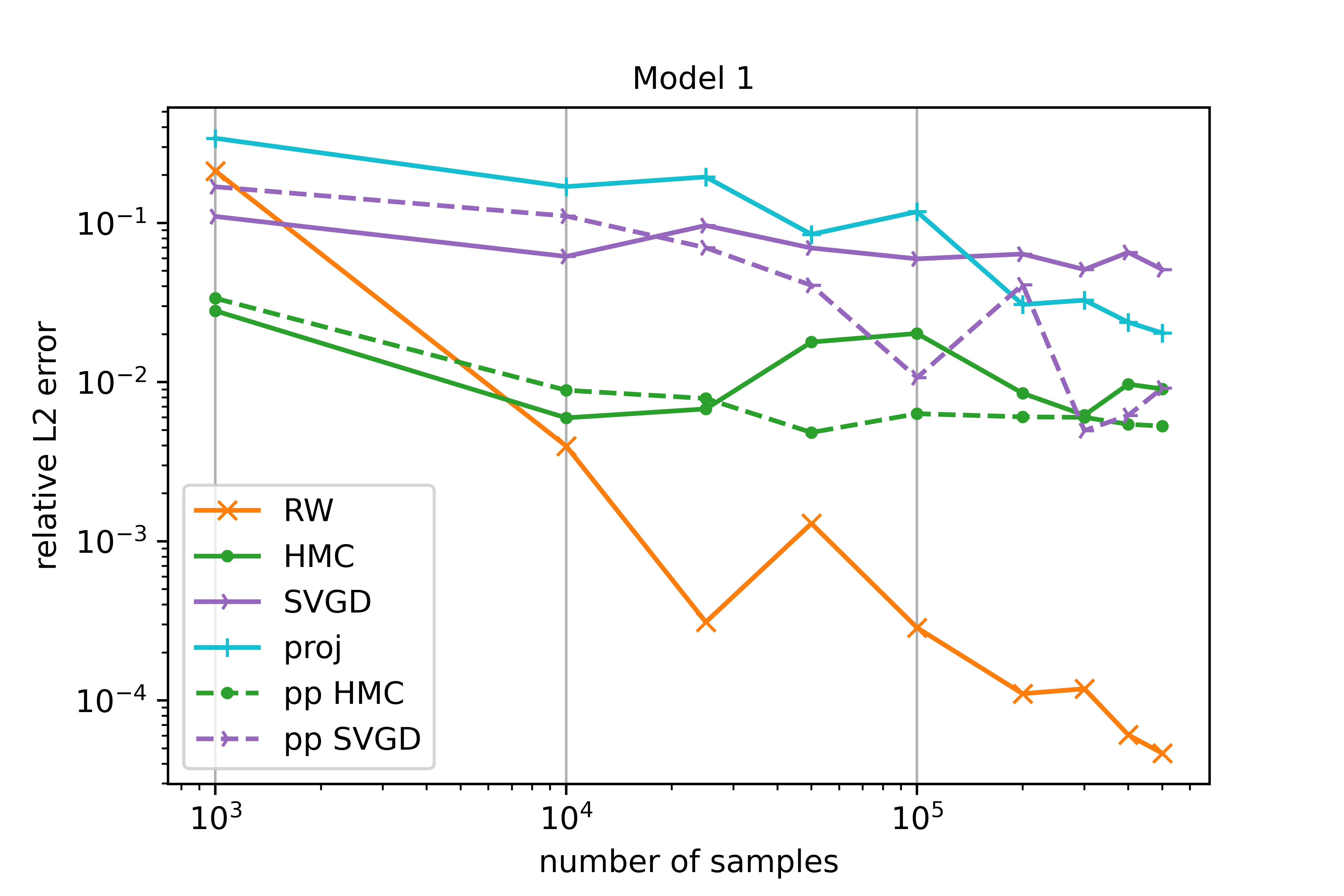}
					\end{minipage}
					\begin{minipage}[t]{0.49\linewidth}
						\includegraphics[width=1.1\linewidth]{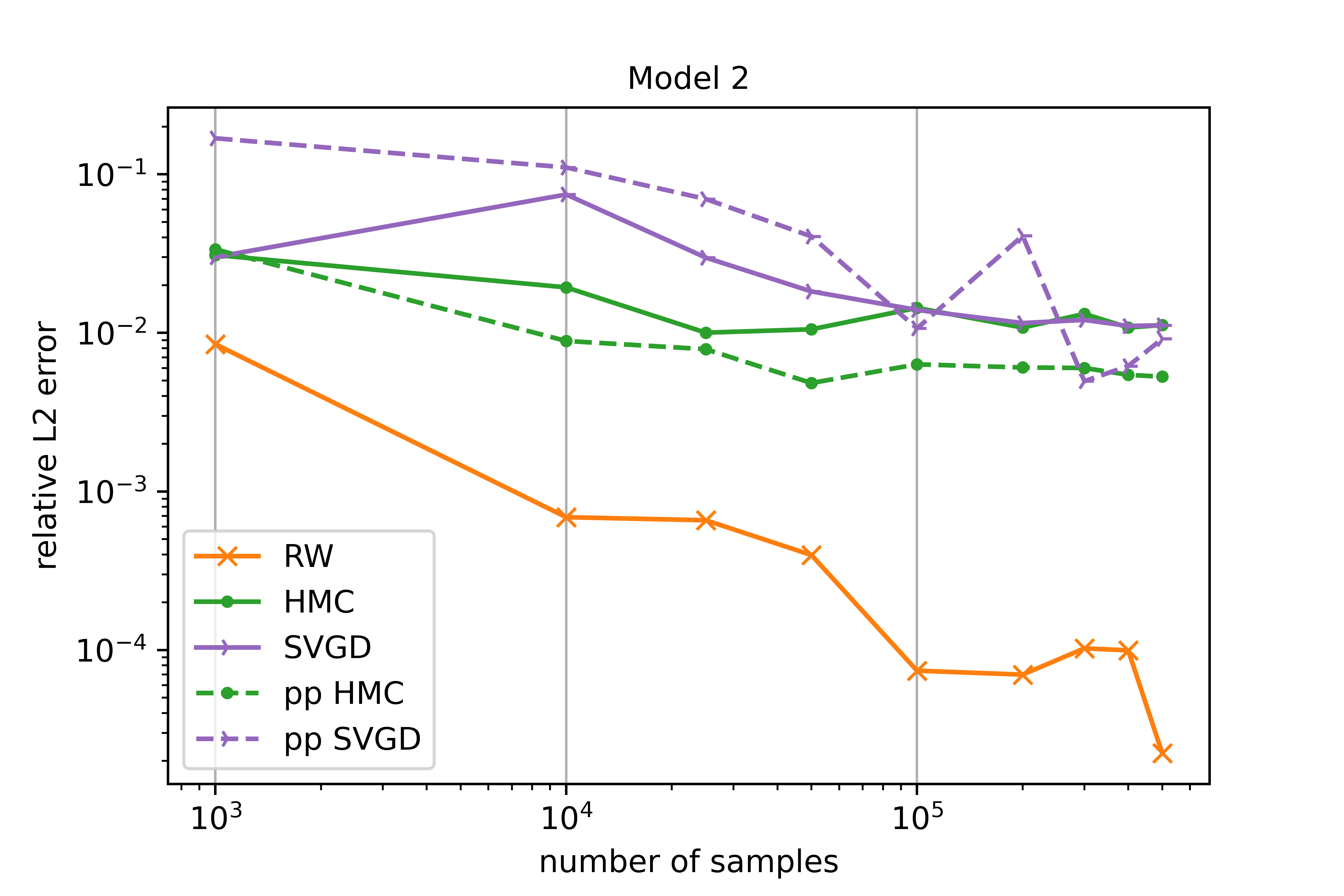}
					\end{minipage}	
					\begin{minipage}[t]{0.49\linewidth}
						\includegraphics[width=1.1\linewidth]{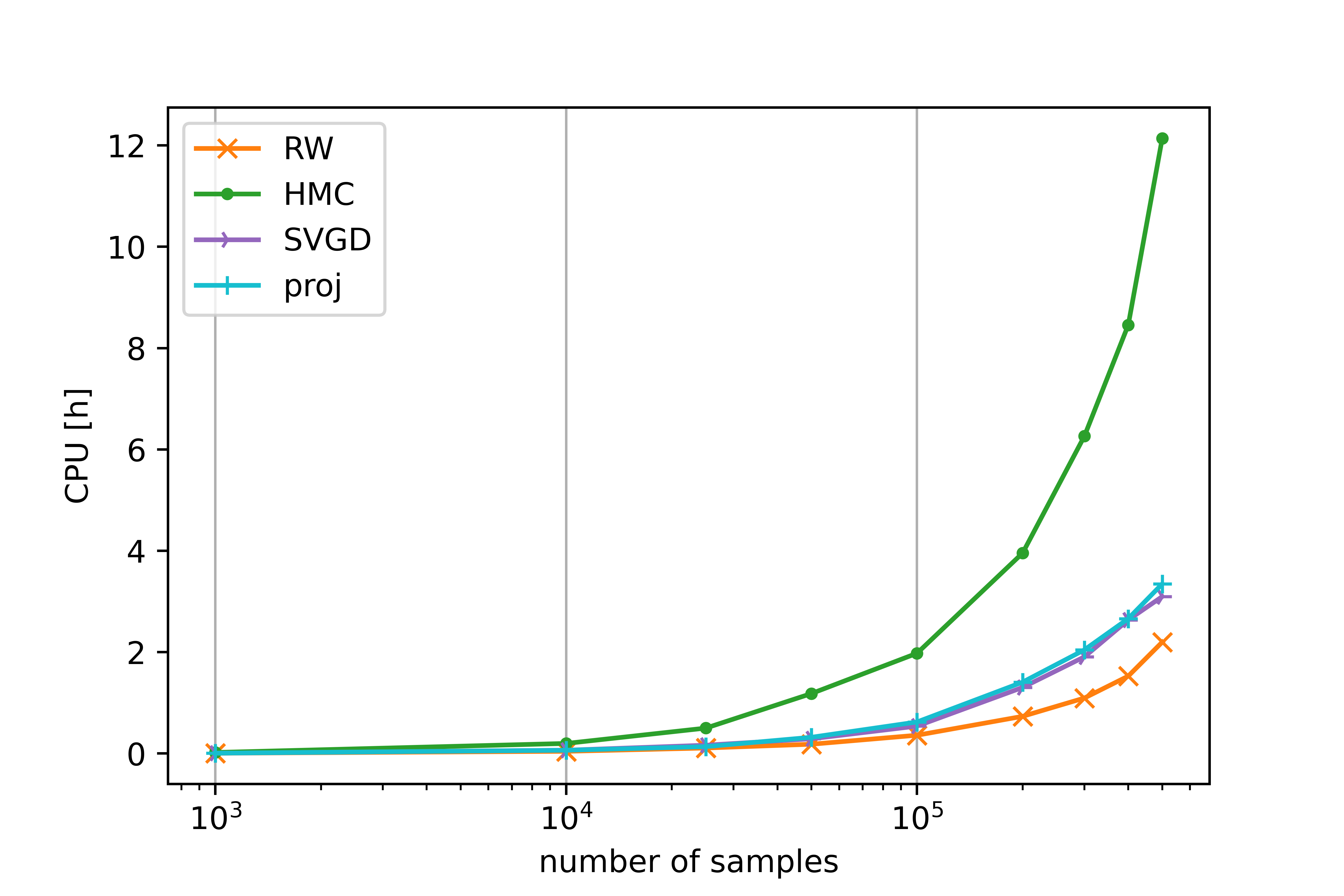}
					\end{minipage}
					\begin{minipage}[t]{0.49\linewidth}
						\includegraphics[width=1.1\linewidth]{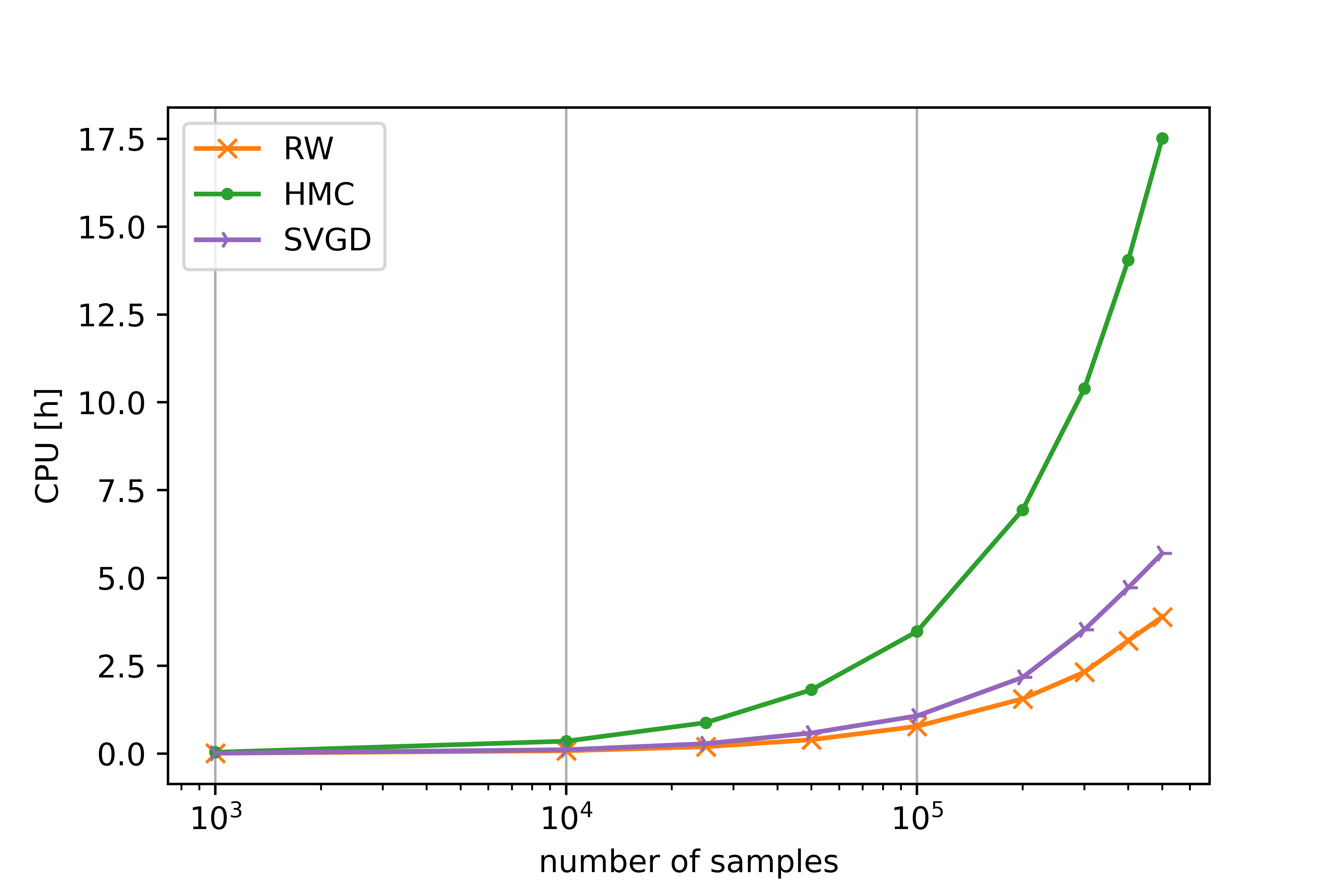}
					\end{minipage}	
					\caption{This figure shows the relative error over the number of samples for model 1 on the left and model 2 on the right. The associated CPU times are shown for each method, respectively.}
					\label{sample}
			\end{figure} }
			
			The choice of hyperparameters influences the performance of the algorithms. For cRW, this is the variance of the random walk. For the cHMC, the momentum, step size, number of LeapFrog steps and the penalization parameter $\delta$ need to be tuned. The Python SVGD algorithm provided by Li already contains a step size adaption methods, such only the initial step size, and not a step size in each iteration round of the particles, needs to be chosen. Furthermore, for the constrained case, the penalization parameter needs to be adapted. For the second model there are two penalizations, as the uniform prior imposes an additional constraint for $\theta > 1000$, which needs to be integrated into the SVGD framework. \\ It is difficult to avoid this source of bias on the performance, as the hyperparameters can only be adjusted as "optimal" as possible for each method, however their fitness is not comparable between the different algorithms. Moreover, as no specific tuning recommendations exist for constrained MCMC methods, the choice of the penalization parameter remains subjective, where the only qualitative criteria is the desired strictness of the constraint fulfillment. This might be highlighted as a key advantage of the cRW, where this is not an issue. \\ Furthermore, common hyperparameter adaption methods can of course be used, however their interference with the constraint are not necessarily clear. All in all, the cRW and the cSVGD with one and two hyperparameters, respectively, seem to be favorable over the cHMC with respect to the hyperparameter selection. \\
			For this analysis, the hyperparameters were tuned for each method and model using the knowledge of the "true" posterior. For the cHMC and cSVGD the distribution of the post-processed samples rather than the whole Markov Chain was used to choose optimal hyperparameters. However, it should be stressed that the performances of the algorithms are highly influenced by the parameter choices, and that this puts a unresolvable bias on the direct comparison. \\
		
			Overall, the cRW shows the best results taking into account the overall performance, quality stability of the proposed candidates with respect to the constrained case and CPU time. As only one hyperparameter is in need of tuning, the required model-specific adjusting of the method is small. The cSVGD, even if not proven best performance comparatively, also has strong potential when used with many particles in parallel which is especially appealing for models with high computational effort. The cHMC, even though providing good results when tuned correctly, has proven to be both very time-costly and hyperparameter sensitive. Therefore, further diagnostics for the cRW are given in the following. \\
			
			For the general case, in which the true posterior is unknown, different convergence diagnostics exist to assess the convergence of the Markov Chain to the stationary distribution. Often these approaches concentrate on the convergence of either the mean and variance or confidence intervals. As the shape of the posterior is non-normally in this case, sample-based confidence interval approaches can describe the overall performance better than tracking the first two statistical moments. \\ Brooks and Gelman propose using diagnostics involving multiple simulations \cite{brooksgelman}, including a sample-based confidence interval approach. Here, the confidence interval is derived using two ways: First, the mean of the within-sequence confidence intervals is computed repeatedly with an increased number of samples. Second, the mean of the total-length confidence interval using multiple chains is tracked. Convergence is suggested when the ratio of the means approaches $1$ as it indicates that the confidence interval computed using the single Markov Chain matches the confidence interval that is based on the full information. For further exemplifications, please see \cite{brooksgelman}. \\
			The results of the method from Brooks and Gelman can be seen in Figure \ref{Brooks_Gelman_method2} for the cRW for model 2 and 3. It shows, that the single chain only has to contain around $1000$ samples for both models for the ratio approaching $1$. The results show that the cRW method still provides a converging Markov Chain for a model with multiple uncertain inputs.
			
			{\begin{figure}[H]
					\centering
					\includegraphics[width=0.8\linewidth]{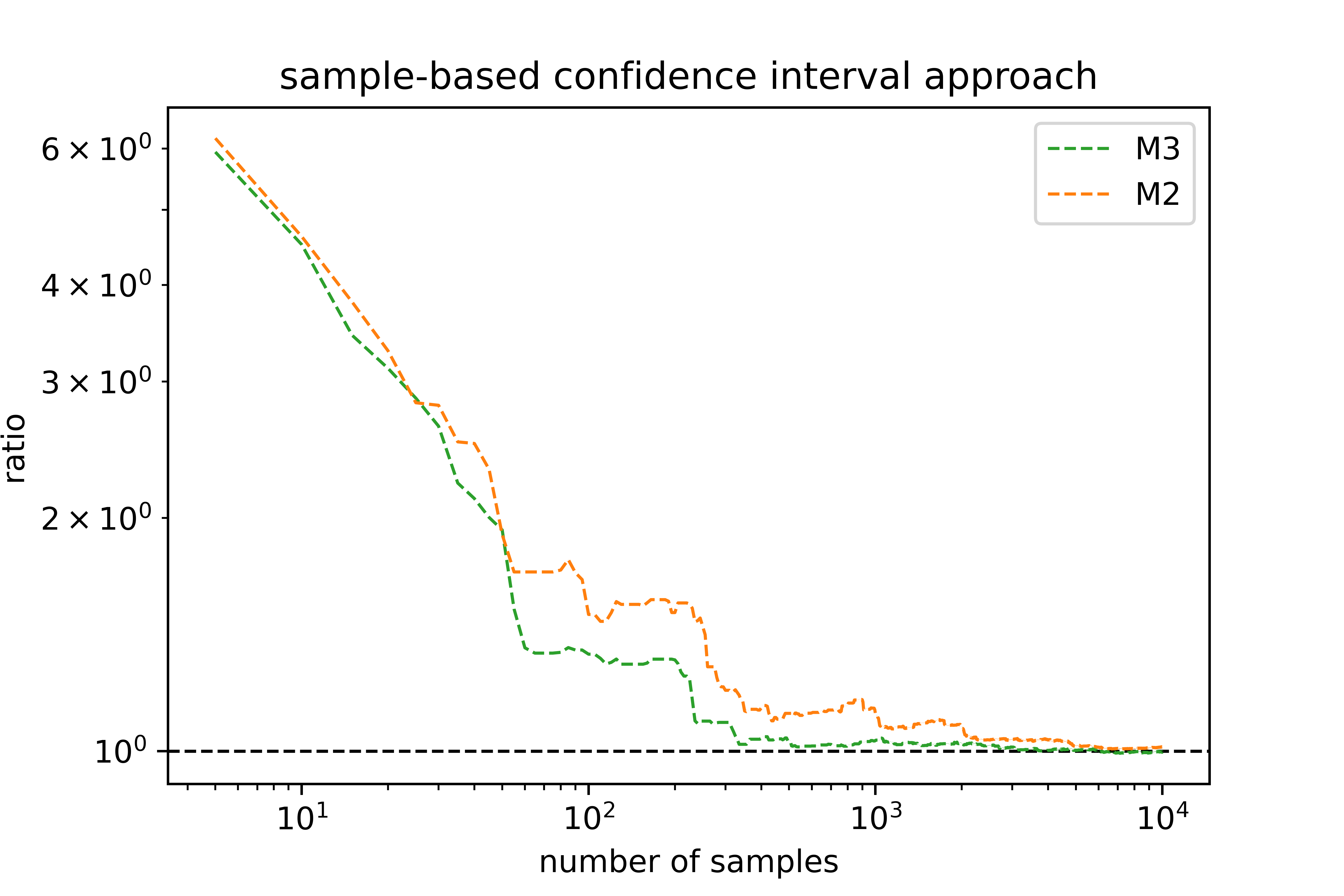}
					\caption{This figure shows the confidence interval based approach by Brooks and Gelman for the second (orange) and third (green) model. The ratio of the $95\%$ confidence interval of a single Markov Chain to the result using two independent Markov Chains of 500k samples is computed. }
					\label{Brooks_Gelman_method2}
			\end{figure} }
			
			\section{Conclusion}
			This work is motivated by transpiration cooling for rocket thrust chambers, where the cooling technique is used to reduce thermal loads of the combustion chamber walls. While being a promising cooling technique, the multiple factors influencing transpiration cooling challenge successful simulation, which motivates the integration of parametric uncertainties into the modeling. \\ 
			An approach is presented that extends Bayesian Inversion to the probabilistic constrained case using generalized Polynomial Chaos and Monte Carlo based methods to solve constrained inverse problems. The fusion enables a close monitoring of the system's critical response, the temperature, by a gPC expansion of the stochastic temperature system while at the same time leveraging the access to inverse problems through the Bayesian framework. \\
			This setting is put into use to constrain three sampling strategies, the RW, the Hamiltonian Monte Carlo and the Stein Variational Gradient Descent. The results of the one-dimensional and two-dimensional models show that the extended strategies successfully sample the constrained target measure and the framework can be extended to different models and constraint requirements. The proposed constrained RW shows a strong performance compared to the other methods in terms of the overall decrease of the $L2$ error over the size of the Markov Chain, as well as quality stability of the proposed candidates with respect to the constrained case and CPU time. \\
			To evaluate the chance constraint, a quantile of the temperature distribution needs to be estimated. Future work will deal with questions on how to improve the accuracy and efficiency of especially high quantile estimates. Extensions of the proposed probabilistic method to coupled simulations of the porous medium with a hot gas flow will be studied.
			
			\bibliographystyle{unsrt}
			\bibliography{/Users/ellasteins/Desktop/Literature/all_Literature}{}
			
			\section{Acknowledgement}
			This work was supported by the Deutsche Forschungsgemeinschaft (DFG, German Research Foundation) - 333849990/GRK2379 (IRTG Modern Inverse Problems).
			
			\appendix
			\section{Appendix}
			
			\begin{table}[H]
				\caption{Deterministic parameter set (with dimensions)}	
				\begin{tabular}[h]{l|l|l}	
					Parameter & Symbol & Value \\	
					\hline
					Reynolds number & $Re$ & $Re_0=405$ \\
					Prandtl number & $Pr_f$ & $0.64$ \\
					Nusselt number & $Nu_{v,f}$ & $7500$ \\
					Heat flux & $q$  & $q_0=30845 \, \, [\frac{W}{m^2}]$\\
					Hot gas temperature & $T_{HG}$ & $347 \, \, [K]$ \\
					Porosity & $\phi$ & $\phi_0=0.111, \phi_1=0.4$ \\
					Thermal conductivity of fluid & $\kappa_f$ & $0.03 \, \, [\frac{W}{mK}]$ \\
					Thermal conductivity of solid & $\kappa_s$ & $15.2 \, \, [\frac{W}{mK}]$ \\
					Permeability & $K_D$ & $3.57e-13 \, \, [m^2]$ \\ 
					Forchheimer coefficient & $K_F$ & $5.17e-08 [m]$ \\
					Coolant reservoir temperature & $T_c$ & $304.2 \, \, [K]$ \\
					Solid reservoir temperature & $T_b$ & $321.9 \, \, [K]$ \\
					Reservoir pressure & $p_R$ & $600000 \, \, [Pa]$ \\
					Length of the porous medium & $L$ & $0.015 \, \, [m]$ 
					\label{tab_set}
				\end{tabular}
			\end{table}
			
			\begin{algorithm}[H]
				\caption {constrained random walk Markov Chain Monte Carlo (cRW)} \label{cMCMC_alg} 
				\begin{algorithmic}
					\State Given: $\sigma$ and initial value $\theta_0$ 
					\For{$i=0,...,N-1$}
					\State 	Draw candidate $\theta^*|\theta_{i}, \text{where } \theta^* \sim \mathcal{N}(\theta_{i}, \sigma^2)$ 
					\State  $\chi(\theta^*)=  \left\{ \begin{matrix} 1 &\text{ if } P_{\boldsymbol{\xi}}(f_2^{gPC}(\boldsymbol{\xi}; \theta^*) \leq \beta) \geq \alpha \\ 0 &\text{ otherwise} \end{matrix} \right.$ 
					\State 	$\alpha_1=\chi_\mathcal{S}(\theta^*) \cdot \text{min } (1, \frac{\Pi_{l}(\theta^*) \cdot \Pi_{prior}(\theta^*)}{\Pi_{l}(\theta_{i}) \cdot \Pi_{prior}(\theta_{i}) })$ 
					\EndFor
					\If {$\alpha_1 = 1$:}
					\State Accept candidate and set $\theta_{i+1}=\theta^*$
					\Else 
					\State Accept candidate and set $\theta_{i+1}=\theta^*$ with probability $\alpha_1$, \\ 
					           else reject and set $\theta_{i+1}=\theta_{i}$ 
					\EndIf
				\end{algorithmic}
			\end{algorithm} 
			
			\begin{algorithm}[H]
				\caption {constrained Hamiltonian Monte Carlo (cHMC)} \label{cHMC_alg} 
				\begin{algorithmic}
					\State Given: $\mathcal{G}$, $\Pi_{prior}$, $\Pi_{l}$, $L$, $s$, $M$ and initial value $\theta_0$ 
					\For{ $i=0,...,N-1$} 
					\State Draw $\tilde{m} \sim \mathcal{N}(0,M)$ and $L_1 \sim \mathcal{U}(1,L)$
					\State Let $\theta_0=\theta_i$ and $\tilde{m} _0= \tilde{m}  + \frac{s}{2} \nabla \mathcal{G}(\theta)|_{\theta_0}$
					\For{$ j= 1,...,L_1$} \\  \textit{Leapfrog-Integration}
					\State $\theta_j = \theta_{j-1} + s \cdot M^{-1} \cdot \tilde{m} _j$ 
					\State $\tilde{m} _j = \tilde{m} _{j-1} + s \cdot \nabla \mathcal{G}(\theta)|_{\theta_j}$ 
					\EndFor
					\State $\theta^* = \theta_{L_1}$
					\State $\tilde{m} ^* = \tilde{m} _{L_1-1} - \frac{s}{2} \nabla \mathcal{G}(\theta)|_{\theta^*}$
					\State \textit{compute acceptance ratio:} \\ 
					$\alpha_1=\text{min } (1, \exp[{\log{(\Pi_{prior}(\theta^*) \cdot \Pi_{l}(\theta)^*)} -\frac{1}{2} M^{-1} (\tilde{m} ^*)^2 - \log{(\Pi_{prior}(\theta) \cdot \Pi_{l}(\theta))} + \frac{1}{2} M^{-1} (\tilde{m} )^2}])$ 
					\If{$\alpha_1 = 1$}
					\State Accept candidate and set $\theta_{i+1}=\theta^*$ 
					\Else
					\State Set $\theta_{i+1}=\theta^*$ with probability $\alpha_1$, else reject and set $\theta_{i+1}=\theta_{i}$. 	
					\EndIf
					\EndFor
				\end{algorithmic}
			\end{algorithm} 
			
			\begin{algorithm}[H]
				\caption {constrained Stein Variational Gradient Descent (cSVGD) } \label{cSVGD_alg} 
				\begin{algorithmic}
					\State Given: $\mathcal{L}$, a RBF kernel $k$, initial step size $s_1$, and a set of initial particles $\{\theta_i^0\}_{i=1}^{N_1}$
					\For {$l=0,...,N_2-1$}
					\State $\theta_i^{l+1}=\theta_i^l+ s_l \, \left( \frac{1}{n} \sum_{j=1}^n [k(\theta_j^l, \theta) \, \nabla_{\theta_j^l} \mathcal{L}(\theta_j^l) + \nabla_{\theta_j^l} \, k(\theta_j^l, \theta)] \right)$
					\EndFor
				\end{algorithmic}
			\end{algorithm} 
			
		\end{document}